\tikzset{commutative diagrams/.cd,every label/.append style = {font = \normalsize}}
\numberwithin{equation}{section}
\newtheorem{thm}{Theorem}
\numberwithin{thm}{section}
\newtheorem{cor}[thm]{Corollary}
\newtheorem{lem}[thm]{Lemma}
\newtheorem{prob}[thm]{Problem}
\newtheorem{prop}[thm]{Proposition}
\theoremstyle{definition}
\newtheorem{defn}[thm]{Definition}
\newtheorem{eg_no_qed}[thm]{Example}
\newenvironment{eg}[1][]{\begin{eg_no_qed}[#1]\pushQED{\qed}}{\popQED\end{eg_no_qed}}
\newtheorem{rmk}[thm]{Remark}
\theoremstyle{remark}
\newtheorem*{claimpf_no_qed}{Proof of Claim}
\DeclareMathOperator{\ad}{ad}
\newcommand{\adjoint}[1]{#1 ^*}
\DeclareMathOperator{\diag}{diag}
\renewcommand{\eqref}[1]{\hyperref[#1]{\textup{(\ref*{#1})}}}
\DeclareMathOperator{\Fl}{Fl}
\DeclareMathOperator{\GL}{GL}
\newcommand{\gl}{\mathfrak{gl}}
\DeclareMathOperator{\Gr}{Gr}
\DeclareMathOperator{\grad}{grad}
\DeclareMathOperator{\im}{im}
\DeclareMathOperator{\spn}{span}
\DeclareMathOperator{\tr}{tr}
\newcommand{\B}{\operatorname{B}}
\newcommand{\bfnu}{\nu}
\newcommand{\bflambda}{\lambda}
\newcommand{\Bminus}{\operatorname{B}^{-}}
\newcommand{\cyc}[2]{\mathsf{cyc}(#1;#2)}
\renewcommand{\dashrightarrow}{\mathrel{\ThisStyle{\ooalign{$\SavedStyle\rightarrow$\cr\hfil\textcolor{white}{\rule{2\LMpt}{1\LMex}}\kern2\LMpt\hfil}}}}
\newcommand{\Diag}[1]{\hspace*{1pt}\mathsf{Diag}(#1)\hspace*{1pt}}
\newcommand{\eval}[1]{\hspace*{-1pt}\big\rvert_{#1}\hspace*{2pt}}
\renewcommand{\H}{\operatorname{T}}
\newcommand{\hypersimplex}{\mathsf{HS}}
\newcommand{\I}{I}
\newcommand{\ii}{\mathrm{i}\hspace*{0.5pt}}
\newcommand{\Jac}{\mathcal{J}}
\DeclareMathOperator{\Kterm}{\pi_{U}}
\DeclareMathOperator{\kterm}{\pi_{\mathfrak{u}}}
\newcommand{\Kry}[2]{\mathsf{Vand}(#1,#2)}
\newcommand{\Moser}{\mathsf{Mos}}
\newcommand{\Moserextend}{\widetilde{\mathsf{Mos}}}
\newcommand{\N}{\operatorname{N}}
\newcommand{\Orbit}{\mathcal{O}}
\newcommand{\Perm}[1]{\mathsf{Perm}(#1)}
\newcommand{\twist}{\vartheta}
\newcommand{\twistorbit}{\vartheta_{\bflambda}}
\newcommand{\U}{\operatorname{U}}
\newcommand{\uu}{\mathfrak{u}}
\title{Symmetric Toda, gradient flows, and tridiagonalization}
\author{Anthony M. Bloch}
\address{Department of Mathematics, University of Michigan}
\email{\href{mailto:abloch@umich.edu}{abloch@umich.edu}}
\author{Steven N. Karp}
\address{Department of Mathematics, University of Notre Dame}
\email{\href{mailto:skarp2@nd.edu}{skarp2@nd.edu}}
\subjclass[2020]{37J35, 14M15, 15B48}
\thanks{A.M.B.\ was partially supported by NSF grants DMS-1613819 and DMS-2103026, and AFOSR grant FA 9550-22-1-0215.}
\begin{document}

\begin{abstract}
The Toda lattice (1967) is a Hamiltonian system given by $n$ points on a line governed by an exponential potential. Flaschka (1974) showed that the Toda lattice is integrable by interpreting it as a flow on the space of symmetric tridiagonal $n\times n$ matrices, while Moser (1975) showed that it is a gradient flow on a projective space. The symmetric Toda flow of Deift, Li, Nanda, and Tomei (1986) generalizes the Toda lattice flow from tridiagonal to all symmetric matrices. They showed the flow is integrable, in the classical sense of having $d$ integrals in involution on its $2d$-dimensional phase space. The system may be viewed as integrable in other ways as well. Firstly, Symes (1980, 1982) solved it explicitly via $QR$-factorization and conjugation. Secondly, Deift, Li, Nanda, and Tomei (1986) `tridiagonalized' the system into a family of tridiagonal Toda lattices which are solvable and integrable. In this paper we derive their tridiagonalization procedure in a natural way using the fact that the symmetric Toda flow is diffeomorphic to a twisted gradient flow on a flag variety, which may then be decomposed into flows on a product of Grassmannians. These flows may in turn be embedded into projective spaces via Pl\"{u}cker embeddings, and mapped back to tridiagonal Toda lattice flows using Moser's construction. In addition, we study the tridiagonalized flows projected onto a product of permutohedra, using the twisted moment map of Bloch, Flaschka, and Ratiu (1990). These ideas are facilitated in a natural way by the theory of total positivity, building on our previous work (2023).
\end{abstract}

\dedicatory{In honor of Hermann Flaschka}
\maketitle

\section{Introduction}\label{sec_introduction}

\noindent This paper concerns the symmetric Toda flows and their connections with the classical Toda lattice, gradient flows on adjoint orbits, and flows on moment polytopes. The (finite nonperiodic) {\itshape Toda lattice} \cite{toda67b} (cf.\ \cite{kodama_shipman18}) is a Hamiltonian system of $n$ points on a line of unit mass governed by an exponential potential, with Hamiltonian
$$
\frac{1}{2}\sum_{i=1}^np_i^2 + \sum_{i=1}^{n-1}e^{q_i - q_{i+1}}.
$$
Following Flaschka \cite{flaschka74}, we make the change of variables
$$
a_i := \textstyle\frac{1}{2}e^{\frac{q_{i}-q_{i+1}}{2}} \; \text{ for } 1 \le i \le n-1 \quad \text{ and } \quad b_i :=  -\frac{1}{2}p_i \; \text{ for } 1 \le i \le n,
$$
which we arrange into the symmetric tridiagonal matrix
$$
M := \begin{bmatrix}
b_1 & a_1 & 0 & \cdots & 0 \\
a_1 & b_2 & a_2 & \cdots & 0 \\
0 & a_2 & b_3 & \cdots & 0 \\
\vdots & \vdots & \vdots & \ddots & \vdots \\
0 & 0 & 0 & \cdots & b_n
\end{bmatrix}; \quad \text{ we also let } \kterm(M) := \begin{bmatrix}
0 & -a_1 & 0 & \cdots & 0 \\
a_1 & 0 & -a_2 & \cdots & 0 \\
0 & a_2 & 0 & \cdots & 0 \\
\vdots & \vdots & \vdots & \ddots & \vdots \\
0 & 0 & 0 & \cdots & 0
\end{bmatrix}
$$
be the skew-symmetric part of $M$. Then the Hamiltonian equations can be written in Lax form as
\begin{align}\label{toda_flaschka_symmetric}
\dot{M} = [M, \kterm(M)],
\end{align}
where $\dot{M}$ denotes the derivative of $M$ with respect to time $t$. Then the eigenvalues of $M$ are preserved along the flow, and allow one to define $n$ integrals in involution \cite{moser75,symes80,symes82}, showing that the Toda lattice is integrable.

Moreover, Moser \cite{moser75} expressed the Toda lattice flow as a gradient flow on a projective space. Namely, let $\lambda_1 > \cdots > \lambda_n$ denote the eigenvalues of $M$ (they are necessarily distinct), and for $1 \le i \le n$, let $v^{(i)}\in\mathbb{R}^n$ be the eigenvector of $M$ with eigenvalue $\lambda_i$ such that $\|v^{(i)}\|=1$ and its first entry $v^{(i)}_1$ is positive ($v^{(i)}_1$ is necessarily nonzero). Setting $x := (v^{(1)}_1 : \cdots : v^{(n)}_1)\in\mathbb{P}^{n-1}(\mathbb{R})$, we can write the Hamiltonian equations \eqref{toda_flaschka_symmetric} as
\begin{align}\label{moser_gradient_intro}
\dot{x}_i = \lambda_ix_i \quad \text{ for all } 1 \le i \le n.
\end{align}
Conversely, given $x\in\mathbb{P}^{n-1}(\mathbb{R})$ with positive entries, Moser constructed a unique corresponding symmetric tridiagonal matrix $M$ with eigenvalues $\lambda_1 > \cdots > \lambda_n$ and positive entries immediately above and below the diagonal, giving a diffeomorphism
\begin{align}\label{moser_map_intro}
x \mapsto M.
\end{align} 

More generally, Deift, Li, Nanda, and Tomei \cite{deift_li_nanda_tomei86} considered the flow \eqref{toda_flaschka_symmetric} for arbitrary (not necessarily tridiagonal) $n\times n$ symmetric matrices $M$, called the {\itshape (full) symmetric Toda flow}. They constructed $\lfloor\frac{n^2}{4}\rfloor$ integrals in involution, showing that the flow is integrable.

This system may be seen to be integrable in other ways as well. Firstly, Symes \cite{symes80,symes82} found an explicit solution using factorization and conjugation. Namely, given an $n\times n$ matrix $g$, let $\Kterm(g)$ denote the unitary matrix obtained by applying Gram--Schmidt orthonormalization to the columns of $g$; equivalently, $\Kterm(g)$ is the $Q$-term in the $QR$-factorization of $g$. Then if $M(t)$ denotes the solution to \eqref{toda_flaschka_symmetric} beginning at the symmetric matrix $M_0$, we have
$$
M(t) = \Kterm(\exp(tM_0))^{-1}M_0\Kterm(\exp(t M_0)) \quad \text{ for all } t\in\mathbb{R}.
$$

Secondly, Deift, Li, Nanda, and Tomei \cite[Section 7]{deift_li_nanda_tomei86} `tridiagonalized' the symmetric Toda flow into a family of $n-1$ Toda lattices which are solvable and integrable. The main goal of this paper is to derive this tridiagonalization procedure in a natural and geometric way.

In order to state our results, we introduce some notation. Let $\U_n$ denote the group of $n\times n$ unitary matrices, and let $\uu_n$ denote its Lie algebra, consisting of all $n\times n$ skew-Hermitian matrices. Given a symmetric (or more generally, Hermitian) $n\times n$ matrix $M$, we associate to it the skew-Hermitian matrix $L := \ii M$ (where $\ii = \sqrt{-1}$). If $M$ has eigenvalues $\bflambda$, then $L$ lies in the {\itshape adjoint orbit} $\Orbit_{\bflambda}$ of $\uu_n$ consisting of all skew-Hermitian matrices with eigenvalues $\ii\lambda_1, \dots, \ii\lambda_n$. Then we can write the symmetric Toda flow as the flow
\begin{align}\label{toda_flaschka_skew}
\dot{L} = [L, \kterm(-\ii L)] \quad \text{ on } \Orbit_{\bflambda}.
\end{align}
Above, the projection $\kterm(\cdot)$ onto $\uu_n$ is defined such that $N - \kterm(N)$ is upper-triangular with real diagonal entries.

In the case that the eigenvalues $\lambda_1, \dots, \lambda_n$ are distinct, we construct a piecewise-smooth involution $\twistorbit$ on $\Orbit_{\bflambda}$, called the {\itshape twist map}. It sends $L = g(\ii\Diag{\bflambda})g^{-1}$ to $g^{-1}(\ii\Diag{\bflambda})g$, where for a given $L$ the unitary matrix $g\in\U_n$ of eigenvectors is chosen according to a certain normalization condition \eqref{canonical_schubert_representative}, coming from the Bruhat decomposition. We then use $\twistorbit$ to show that the symmetric Toda flow \eqref{toda_flaschka_skew} is a gradient flow (see \cref{full_symmetric_toda_gradient}):
\begin{thm}\label{full_symmetric_toda_gradient_intro}
Let $\bflambda = (\lambda_1 > \cdots > \lambda_n)$. Then the symmetric Toda flow \eqref{toda_flaschka_skew} on $\Orbit_{\bflambda}$ is, upon applying the twist map $\twistorbit$, the gradient flow in the K\"{a}hler metric with respect to $\Diag{-\ii\lambda_1, \dots, -\ii\lambda_n}\in\uu_n$.
\end{thm}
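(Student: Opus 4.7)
The plan is to verify the identity pointwise on $\Orbit_{\bflambda}$ by an explicit tangent-space computation, exploiting the $\U_n$-equivariance of the Kähler structure. First I set up the Kähler metric. The identification $\Orbit_{\bflambda} \cong \U_n/\TK \cong \GL_n(\mathbb{C})/\B$ equips $\Orbit_{\bflambda}$ with a $\U_n$-invariant Kähler structure: the symplectic form is the Kirillov--Kostant--Souriau form $\omega_L([X,L],[Y,L]) = \langle L,[X,Y]\rangle$ with respect to the real inner product $\langle X,Y\rangle := -\Tr(XY)$ on $\uu_n$, and the complex structure comes from $\GL_n(\mathbb{C})/\B$. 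Setting $\Lambda := \ii\Diag{\bflambda}$, the Iwasawa decomposition $\gl_n(\mathbb{C}) = \uu_n \oplus \b$ (with $\b$ upper triangular) yields the identification $T_\Lambda\Orbit_{\bflambda} \cong \uu_n/\h \cong \n^-$, on which the complex structure pulls back to $X \mapsto \ii(X^- - X^+)$, where $X^\pm$ denote the strictly lower- and upper-triangular parts of $X$; this is transported to all of $\Orbit_{\bflambda}$ by $\U_n$-equivariance.

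Next I would compute $\grad f(\tilde L)$, where $f(L) := \langle L, A\rangle$ with $A := \Diag{-\ii\bflambda} = -\Lambda$, and where $\tilde L := \twistorbit(L) = g^{-1}\Lambda g$ for the canonical representative $g$ from \eqref{canonical_schubert_representative}. A short ad-invariance calculation gives the Hamiltonian vector field $X_f(L) = [A,L]$, and on a Kähler manifold $\grad f = \mathbb{J}\, X_f$. Applying equivariance (with $\mathbb{J}_{\tilde L}$ transported from $\mathbb{J}_\Lambda$ by conjugation by $g^{-1}$) and using the identity $\kterm(-\ii L) = \ii(L^+ - L^-)$ that follows from the definition of $\kterm$, one obtains
\[
\grad f(\tilde L) = [\,g^{-1}\kterm(-\ii L)\,g,\; \tilde L\,].
\]

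Finally, I would compute $(d\twistorbit)_L(\dot L)$ directly. Parameterizing a Toda trajectory as $L(t) = g(t)\Lambda g(t)^{-1}$ with canonical $g(t)$, we have $\dot L = [\xi,L]$ where $\xi := \dot g\, g^{-1}$; the Toda equation forces $\xi = -\kterm(-\ii L)$ modulo the centralizer of $L$, and the canonicity condition \eqref{canonical_schubert_representative} pins down $\xi = -\kterm(-\ii L)$ exactly. Differentiating $\tilde L(t) = g(t)^{-1}\Lambda g(t)$ then yields $(d\twistorbit)_L(\dot L) = [\tilde L,\,g^{-1}\xi g] = [\,g^{-1}\kterm(-\ii L)\,g,\,\tilde L\,]$, matching the gradient. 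The main obstacle is verifying that the canonicity condition really does force $\xi = -\kterm(-\ii L)$ exactly (this matters because conjugation by $g^{-1}$ does not preserve the stabilizer, so a nonzero centralizer component in $\xi$ would produce a genuine discrepancy) and that canonicity is preserved along the flow; since $\twistorbit$ is only piecewise smooth, this must be checked on each Bruhat stratum, and this is precisely where the total-positivity framework of the earlier sections is deployed.
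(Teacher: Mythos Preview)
Your approach is infinitesimal and genuinely different from the paper's. The paper does not compute the Kähler gradient via $\mathbb{J}X_f$ at all; instead it invokes the explicit \emph{integrated} solutions on both sides. On the gradient side, \cref{gradient_flow_kahler} gives $g(t)=\Kterm(\exp(t\ii N)g_0)$; on the Toda side, Symes' factorization gives the analogous formula. The proof then reduces to a single observation: if $g_0$ satisfies the canonicity condition \eqref{canonical_schubert_representative} for some $w$, then so does $\Kterm(\exp(t\ii N)g_0)$, since $\exp(t\ii N)$ is diagonal with positive entries and hence only rescales the relevant minors by positive constants. That one line simultaneously settles both of the issues you flagged as obstacles---preservation of canonicity along the flow, and the identification of $\xi$---without ever differentiating. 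Your route trades the dependence on Symes and \cref{gradient_flow_kahler} for an explicit Kähler-geometry calculation; the paper's route trades that geometry for the black box of the explicit solutions and a trivial invariance check.

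Two points to tighten in your version. First, you have written $\tilde L=g^{-1}\Lambda g$, but by \cref{defn_twist_orbit} the twist is $\twistorbit(L)=\delta_n g^{-1}\Lambda g\,\delta_n$; the extra conjugation by $\delta_n$ is harmless for the conclusion (since $\delta_n\in\U_n$ commutes with $N$ and the Kähler structure is $\U_n$-invariant), but it does change your formulas for $(d\twistorbit)_L$ and should be carried through. Second, your closing remark that ``this is precisely where the total-positivity framework is deployed'' is off the mark: the general twist of \cref{sec_general_twist} is defined stratum-by-stratum via the Bruhat decomposition, and the argument the paper actually uses (positivity of the diagonal of $\exp(t\ii N)$ preserving \eqref{canonical_schubert_representative}) is elementary and does not invoke total positivity. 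The obstacle you named is real, but the resolution is this one-line minor-rescaling observation, not any positivity machinery.
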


Now let $\Fl_n(\mathbb{C})$ denote the {\itshape complete flag variety}, consisting of all chains of subspaces $(V_1, \dots, V_{n-1})$ of $\mathbb{C}^n$ such that
$$
V_1 \subset \cdots \subset V_{n-1} \quad \text{ and } \quad \dim(V_k) = k \text{ for all } 1 \le k \le n-1.
$$
Given $L\in\Orbit_{\bflambda}$, let $V_k$ (for $1 \le k \le n-1$) denote the subspace of $\mathbb{C}^n$ spanned by the eigenvectors corresponding to the eigenvalues $\ii\lambda_1, \dots, \ii\lambda_k$. It is well-known (and one can verify) that the map
\begin{align}\label{orbit_complete_isomorphism_intro}
\Orbit_{\bflambda}\xrightarrow{\cong}\Fl_n(\mathbb{C}), \quad L \mapsto (V_1, \dots, V_{n-1})
\end{align}
is an isomorphism. Also, $\Fl_n(\mathbb{C})$ has the structure of a projective variety, given by the {\itshape Pl\"{u}cker embedding}
\begin{align}\label{plucker_embedding_intro}
\Fl_n(\mathbb{C})\hookrightarrow\prod_{k=1}^{n-1}\mathbb{P}^{\left(\hspace*{-1pt}\binom{n}{k}-1\right)}(\mathbb{C})
\end{align}
(see \eqref{plucker_embedding_flag}). Then by \cref{full_symmetric_toda_gradient_intro}, we can embed the symmetric Toda flow \eqref{toda_flaschka_skew} on $\Orbit_{\bflambda}$ as a gradient flow on a product of projective spaces. The gradient flow on each projective space can be written in the form \eqref{moser_gradient_intro}, which by Moser's map \eqref{moser_map_intro} is equivalent to a Toda lattice flow; in general, for the $k$th projective space, the associated tridiagonal matrix has size $\binom{n}{k}\times\binom{n}{k}$. We may summarize this as follows (see \cref{tridiagonalization_via_plucker} for a precise statement):
\begin{thm}\label{tridiagonalization_via_plucker_intro}
The tridiagonalization procedure of Deift, Li, Nanda, and Tomei \cite[Section 7]{deift_li_nanda_tomei86} of the symmetric Toda flow \eqref{toda_flaschka_symmetric} on $n\times n$ symmetric matrices $M$ is given as follows. By multiplying by $\ii$ and applying the maps $\twistorbit$, \eqref{orbit_complete_isomorphism_intro}, and \eqref{plucker_embedding_intro}, we can embed the symmetric Toda flow as a gradient flow on the product of projective spaces
$$
\prod_{k=1}^{n-1}\mathbb{P}^{\left(\hspace*{-1pt}\binom{n}{k}-1\right)}(\mathbb{R}).
$$
For $(p_1, \dots, p_{n-1}) \in \prod_{k=1}^{n-1}\mathbb{P}^{\left(\hspace*{-1pt}\binom{n}{k}-1\right)}(\mathbb{R})$, we disregard the zero coordinates of every $p_i$ (for $1 \le i \le n-1$) and replace each nonzero coordinate of $p_i$ with its absolute value. Then applying Moser's map \eqref{moser_map_intro} embeds the symmetric Toda flow into a family of $n-1$ tridiagonal Toda lattice flows.
\end{thm}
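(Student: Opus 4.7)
The plan is to chain together the tools assembled above. By \cref{full_symmetric_toda_gradient_intro}, after multiplying by $\ii$ and applying $\twistorbit$, the symmetric Toda flow on $\Orbit_{\bflambda}$ becomes the K\"{a}hler gradient flow with respect to $\Diag{-\ii\lambda_1,\dots,-\ii\lambda_n}$. I would first transport this gradient flow to $\Fl_n(\mathbb{C})$ via \eqref{orbit_complete_isomorphism_intro}, and then to the product of projective spaces via the Pl\"{u}cker embedding \eqref{plucker_embedding_intro}. The key structural input is that \eqref{plucker_embedding_intro} factors through $\prod_{k=1}^{n-1}\Gr_k(\mathbb{C}^n)$ and restricts on each Grassmannian factor to a K\"{a}hler embedding (up to a normalization of the Fubini--Study metric), while the linear Hamiltonian associated to $\Diag{-\ii\bflambda}$ decomposes additively across these factors. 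Consequently, the single gradient flow on $\Orbit_{\bflambda}$ decouples into $n-1$ independent gradient flows, one on each $\mathbb{P}^{\binom{n}{k}-1}(\mathbb{C})$.

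Next, I would compute each gradient flow explicitly in Pl\"{u}cker coordinates. The Pl\"{u}cker coordinate $p_I$ on $\Gr_k(\mathbb{C}^n)$ indexed by a $k$-subset $I\subset\{1,\dots,n\}$ is a weight vector of weight $\sum_{i\in I}\lambda_i$ for the diagonal torus action induced by $\Diag{-\ii\bflambda}$, and the gradient flow of a diagonal Hamiltonian on Fubini--Study projective space multiplies each weight vector by its weight. Thus the flow on the $k$-th factor reads
\begin{equation*}
\dot{p}_I \;=\; \bigg(\sum_{i\in I}\lambda_i\bigg)\, p_I \qquad \text{for all } I\subset\{1,\dots,n\} \text{ with } |I|=k,
\end{equation*}
matching \eqref{moser_gradient_intro} with the $\binom{n}{k}$ eigenvalues $\big\{\sum_{i\in I}\lambda_i:|I|=k\big\}$. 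Since $M_0$ is real, so are all Pl\"{u}cker coordinates; and since the Moser flow \eqref{moser_gradient_intro} commutes entry-wise with the absolute value map and preserves the zero/nonzero pattern of each coordinate, passing to $|p_I|$ and discarding the zero coordinates yields a well-defined flow on the positive chart of a (possibly lower-dimensional) real projective space to which Moser's map \eqref{moser_map_intro} applies. Its image is a symmetric tridiagonal matrix evolving under \eqref{toda_flaschka_symmetric}.

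It remains to identify the resulting $n-1$ flows with the tridiagonalization of \cite[Section 7]{deift_li_nanda_tomei86}. Their tridiagonal Toda lattices have spectra $\big\{\sum_{i\in I}\lambda_i:|I|=k\big\}$ and initial data read off from the Pl\"{u}cker coordinates of the eigenflag of $M_0$. Our construction produces matching spectra and initial data, and Moser's map is a bijection on its positive domain, so the two constructions must coincide.

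The principal obstacle is verifying the K\"{a}hler compatibility from the first paragraph: namely, that the Pl\"{u}cker embedding restricts to a K\"{a}hler embedding on each Grassmannian factor with the correct scaling, and that the Hamiltonian pulls back to a sum over the projective factors. One must also be careful about the piecewise-smooth domain of $\twistorbit$, confirming that the absolute-value step on Pl\"{u}cker coordinates is compatible with the Schubert-cell normalization \eqref{canonical_schubert_representative} used to define $\twistorbit$. Once these compatibilities are in place, the rest reduces to a weight computation for the Pl\"{u}cker embedding together with uniqueness of Moser's map.
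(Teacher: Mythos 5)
Your overall route—twist, then gradient flow, then Pl\"{u}cker coordinates evolving by $\dot p_I = \big(\sum_{i\in I}\lambda_i\big)p_I$, then absolute values and Moser's map—is the same as the paper's, which implements it through \cref{full_symmetric_toda_gradient}, \cref{gradient_flow_kahler}, \cref{diagonal_pluckers}, and \cref{toda_in_moser}. One remark on the step you flag as the principal obstacle: the K\"{a}hler-embedding compatibility is not needed at all. The K\"{a}hler gradient flow for $N=-\ii\Diag{\bflambda}$ is simply $V(t)=\exp(t\ii N)V_0$ (\cref{gradient_flow_kahler}), and since the Pl\"{u}cker embedding is equivariant for the $\GL_n(\mathbb{C})$-action, the weight computation $\Delta_I(V(t)) = e^{(\sum_{i\in I}\lambda_i)t}\Delta_I(V_0)$ is immediate (\cref{diagonal_pluckers}); no statement about isometric embeddings or decoupling of metrics is required.

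The genuine gap is in your last paragraph. The theorem asserts that this pipeline \emph{is} the tridiagonalization procedure of Deift--Li--Nanda--Tomei, i.e.\ that the Jacobi matrix obtained from $\big(|\Delta_I(\twist(V))| : I\in\binom{[n]}{k}\big)$ via Moser's map equals their $(M_{(k)})_T$, which is defined through the cyclic subspace $\cyc{M_{(k)}}{e_1}$ and Gram--Schmidt. You dispatch this by asserting that their lattices have spectra $\{\sum_{i\in I}\lambda_i\}$ and ``initial data read off from the Pl\"{u}cker coordinates of the eigenflag of $M_0$,'' but that assertion is exactly what must be proved, and as phrased it is also imprecise: the DLNT data are the first entries of the normalized eigenvectors of $M_{(k)}$, i.e.\ the row minors $\Delta_{[k],I}(g)$ of the eigenvector matrix $g$, which up to sign are Pl\"{u}cker coordinates of the \emph{twisted} flag $\twist(V)$, not of the eigenflag $V$ itself---this conversion from column data to row data is precisely the role of the twist map. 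The paper supplies the missing argument in \cref{hermitian_tridiagonalization_properties} (the cyclic-subspace tridiagonalization $M_T$ is a Jacobi matrix; its eigenvalues are exactly the $\nu_i$ with $g_{1,i}\neq 0$, and are simple; its normalized eigenvectors have first entries $|g_{1,i}|$), then \cref{tridiagonalization_from_moser} (hence $\Moserextend_{\bflambda}(g_{1,1}:\cdots:g_{1,n}) = \ii M_T$, via the uniqueness in \cref{tridiagonal_flag}), and \cref{wedge_matrix_from_moser} (applied to $M_{(k)}$, whose eigenvectors are $\bigwedge_{i\in I}ge_i$ with first entries $\Delta_{[k],I}(g)$, together with $|\Delta_{[k],I}(g)| = |\Delta_I(\twist(V))|$). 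Without some version of this analysis, your argument shows only that the twist/Pl\"{u}cker/Moser pipeline produces a family of tridiagonal Toda flows with the expected spectra, not that it coincides with the DLNT tridiagonalization, which is the actual content of the theorem.
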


As an application, we apply a construction of Bloch, Flaschka, and Ratiu \cite{bloch_flaschka_ratiu90} which maps the tridiagonal Toda lattice flows homeomorphically onto the associated moment polytope (a {\itshape permutohedron}). \cref{tridiagonalization_via_plucker_intro} allows us to embed the symmetric Toda flow as a flow on a product of $n-1$ permutohedra (see \cref{moment_embedding_permutohedron_product}). We also consider a closely related construction, which maps the symmetric Toda flow to a flow on a product of $n-1$ {\itshape hypersimplices} (moment polytopes for Grassmannians). We pose the problem of whether this map is an embedding (see \cref{interpolated_moment_map_problem}).

We mention that one of our motivations for studying the symmetric Toda flow is its relationship with the theory of {\itshape total positivity} for flag varieties, introduced by Lusztig \cite{lusztig94}. While total positivity is not part of the statements of our main results, it was key to our preceding work \cite{bloch_karp23b} and provided the impetus for this work. For example, \cref{full_symmetric_toda_gradient_intro} above is inspired by \cite[Theorem 8.6]{bloch_karp23b}. We refer to \cite{bloch_karp23b} for further details on total positivity, as well as for references to related works in the literature.

\subsection*{Outline}
In \cref{sec_background} we recall some background. In \cref{sec_general_twist} we introduce the twist map $\twistorbit$. In \cref{sec_toda_gradient} we show that the symmetric Toda flow is a twisted gradient flow. In \cref{sec_moser} we discuss the tridiagonal Toda lattice and Moser's map \eqref{moser_map_intro}. In \cref{sec_tridiagonalization} we recall the tridiagonalization construction of \cite{deift_li_nanda_tomei86}, and give our new interpretation of it. In \cref{moment_sec} we apply this construction to study Toda flows on moment polytopes.

We remark that the Toda flows are traditionally studied over the real numbers, and our statement of \cref{tridiagonalization_via_plucker_intro} follows this tradition. However, all of our arguments and results hold over the complex numbers, so we work over $\mathbb{C}$ (for example, \cref{tridiagonalization_via_plucker_intro} follows by specializing \cref{tridiagonalization_via_plucker} over $\mathbb{R}$). This is for the sake both of generality, and for consistency with our Lie-algebraic setup. On the other hand, none of our arguments require working over $\mathbb{C}$ (rather than $\mathbb{R}$).

\subsection*{Acknowledgments}
We thank the anonymous reviewers for helpful comments.

\section{Background}\label{sec_background}

\noindent We recall some important background on flag varieties, adjoint orbits, and total positivity. Our notation throughout this paper is consistent with our previous work \cite{bloch_karp23b}, to which we refer for further details and examples.

Throughout the paper, we fix a strictly decreasing vector $\bflambda = (\lambda_1, \dots, \lambda_n)\in\mathbb{R}^n$. We point out that in \cite{bloch_karp23b}, we more generally consider $\bflambda$ which are weakly decreasing; we will not need to do so here.

Let $\mathbb{N} := \{0, 1, 2, \dots, \}$, and for $n\in\mathbb{N}$, define $[n] := \{1, 2, \dots, n\}$. For $k\in\mathbb{N}$, we let $\binom{[n]}{k}$ denote the set of $k$-element subsets of $[n]$. We let $e_1, \dots, e_n$ denote the unit vectors of $\mathbb{C}^n$. Given an $m\times n$ matrix $A$ and subsets $I\subseteq [m]$ and $J\subseteq [n]$, we let $A_{I,J}$ denote the submatrix of $A$ using rows $I$ and columns $J$. If $|I| = |J|$, we let $\Delta_{I,J}(A)$ denote $\det(A_{I,J})$, called a {\itshape minor} of $A$. If $J = [k]$, where $k = |I|$, we call $\Delta_{I,[k]}(A)$ a {\itshape left-justified minor} of $A$, which we denote by $\Delta_I(A)$.

We let $\mathbb{P}^{n-1}(\mathbb{C})$ denote the {\itshape projective space} of all nonzero vectors $(x_1 : \cdots : x_n)$ modulo rescaling. We let $\gl_n(\mathbb{C})$ denote the Lie algebra of $n\times n$ matrices over $\mathbb{C}$, with Lie bracket
$$
[L,M] := LM - ML \quad \text{ for all } L,M\in\gl_n(\mathbb{C}).
$$
We let $\ad_L := [L,\cdot]$ denote the {\itshape adjoint operator} of $L\in\gl_n(\mathbb{C})$. We let $\diag(L)\in\mathbb{C}^n$ denote the vector of diagonal entries of $L\in\gl_n(\mathbb{C})$. Finally, we let $\Diag{c_1, \dots, c_n}\in\gl_n(\mathbb{C})$ denote the $n\times n$ diagonal matrix with diagonal entries $c_1, \dots, c_n$.

We recall the {\itshape Cauchy--Binet identity} (see e.g.\ \cite[I.(14)]{gantmacher59}): if $A$ is an $m\times n$ matrix, $B$ is an $n\times p$ matrix, and $1 \le k \le m,p$, then
\begin{align}\label{cauchy-binet}
\Delta_{I,J}(AB) = \sum_{K\in\binom{[n]}{k}}\Delta_{I,K}(A)\Delta_{K,J}(B) \quad \text{ for all } I\in\textstyle\binom{[m]}{k} \text{ and } J\in\binom{[p]}{k}.
\end{align}

\subsection{Flag varieties and adjoint orbits}\label{sec_background_flag}
We introduce Grassmannians and complete flag varieties, which will play an important role in the paper.
\begin{defn}\label{defn_grassmannian}
Let $0 \le k \le n$. We define the {\itshape Grassmannian} $\Gr_{k,n}(\mathbb{C})$ as the set of all $k$-dimensional linear subspaces of $\mathbb{C}^n$. Given $V\in\Gr_{k,n}(\mathbb{C})$, we say that an $n\times k$ matrix $A$ {\itshape represents} $V$ if its columns form a basis of $V$. We have the {\itshape Pl\"{u}cker embedding}
\begin{align}\label{plucker_embedding_grassmannian}
\Gr_{k,n}(\mathbb{C})\hookrightarrow\mathbb{P}^{\left(\hspace*{-1pt}\binom{n}{k}-1\right)}(\mathbb{C}), \quad V \mapsto \big(\Delta_I(A) : I\in\textstyle\binom{[n]}{k}\big),
\end{align}
which does not depend on the choice of $A$. We call the projective coordinates $\Delta_I(\cdot)$ on $\Gr_{k,n}(\mathbb{C})$ {\itshape Pl\"{u}cker coordinates}.
\end{defn}

\begin{defn}\label{defn_Fl}
Let $\GL_n(\mathbb{C})$ denote the general linear group of all $n\times n$ invertible matrices over $\mathbb{C}$, and let $\B_n(\mathbb{C})$ denote the Borel subgroup of $\GL_n(\mathbb{C})$ of all upper-triangular matrices. We define the {\itshape complete flag variety} as the quotient
$$
\Fl_n(\mathbb{C}) := \GL_n(\mathbb{C})/\B_n(\mathbb{C}),
$$
which we may identify with the variety of complete flags of linear subspaces of $\mathbb{C}^n$
$$
\{V = (V_1, \dots, V_{n-1}) : 0 \subset V_1 \subset \cdots \subset V_{n-1} \subset \mathbb{C}^n \text{ and } \dim(V_k) = k \text{ for } 1 \le k \le n-1\}.
$$
This identification sends $g\in\GL_n(\mathbb{C})/\B_n(\mathbb{C})$ to the tuple $(V_1, \dots, V_{n-1})$, where each $V_k$ is the span of the first $k$ columns of $g$. We will freely alternate between regarding complete flags as elements $g\in\GL_n(\mathbb{C})/\B_n(\mathbb{C})$ or as tuples $(V_1, \dots, V_{n-1})$.

We have the {\itshape Pl\"{u}cker embedding}
\begin{align}\label{plucker_embedding_flag}
\Fl_n(\mathbb{C})\hookrightarrow\prod_{k=1}^{n-1}\mathbb{P}^{\left(\hspace*{-1pt}\binom{n}{k}-1\right)}(\mathbb{C}), \quad g \mapsto \big(\Delta_I(g) : I\in\textstyle\binom{[n]}{k}\big)_{1 \le k \le n-1},
\end{align}
which is given by the embedding
\begin{align}\label{embedding_Fl_Gr}
\Fl_n(\mathbb{C}) \hookrightarrow \prod_{k=1}^{n-1}\Gr_{k,n}(\mathbb{C}), \quad (V_1, \dots, V_{n-1}) \mapsto (V_1, \dots, V_{n-1}),
\end{align}
and then applying the Pl\"{u}cker embedding \eqref{plucker_embedding_grassmannian} to each term $\Gr_{k,n}(\mathbb{C})$. We call the left-justified minors $\Delta_I(g)$ appearing above the {\itshape Pl\"{u}cker coordinates} of $g\in\Fl_n(\mathbb{C})$ (also known as {\itshape flag minors}).
\end{defn}

\begin{eg}\label{eg_Fl}
We can write a generic element of the complete flag variety $\Fl_3(\mathbb{C})$ as
$$
g := \begin{bmatrix}
1 & 0 & 0 \\
a & 1 & 0 \\
b & c & 1
\end{bmatrix}\in\GL_3(\mathbb{C})/\B_3(\mathbb{C}), \quad \text{ where } a,b,c\in\mathbb{C}.
$$
The Pl\"{u}cker embedding \eqref{plucker_embedding_flag} takes $g$ to
\begin{multline*}
\big((\Delta_1(g) : \Delta_2(g) : \Delta_3(g)), (\Delta_{12}(g) : \Delta_{13}(g) : \Delta_{23}(g)\big) \\
= \big((1 : a : b), (1 : c : ac-b)\big) \in \mathbb{P}^3(\mathbb{C})\times\mathbb{P}^3(\mathbb{C}).\qedhere
\end{multline*}
\end{eg}

We recall that $\Orbit_{\bflambda}$ denotes the adjoint orbit of $\uu_n$ consisting of all skew-Hermitian matrices with eigenvalues $\ii\lambda_1, \dots, \ii\lambda_n$:
$$
\Orbit_{\bflambda} = \{g(\ii\Diag{\bflambda})g^{-1} : g\in\U_n\}.
$$
We observe that we can write the isomorphism $\Orbit_{\bflambda}\xrightarrow{\cong}\Fl_n(\mathbb{C})$ from \eqref{orbit_complete_isomorphism_intro} equivalently as
$$
\Orbit_{\bflambda}\xrightarrow{\cong}\Fl_n(\mathbb{C}), \quad g(\ii\Diag{\bflambda})g^{-1} \mapsto g.
$$

\begin{rmk}\label{projection_sum}
The embedding \eqref{embedding_Fl_Gr} has a natural interpretation in $\Orbit_{\bflambda}$. Namely, given $(V_1, \dots, V_{n-1})\in\Fl_n(\mathbb{C})$, let $\ii M\in\Orbit_{\bflambda}$ be the corresponding element under the isomorphism \eqref{orbit_complete_isomorphism_intro}. Then
\begin{align}\label{projection_sum_formula}
M = \bigg(\sum_{k=1}^{n-1}(\lambda_k - \lambda_{k+1})P_k\bigg) + \lambda_n\I_n,
\end{align}
where $P_k\in\gl_n(\mathbb{C})$ is orthogonal projection onto $V_k$ \cite[Lemma 4.16]{bloch_karp23b}.
\end{rmk}

\subsection{Total positivity}\label{sec_background_positivity}
We recall the notion of total positivity for $\Fl_n(\mathbb{C})$.
\begin{defn}[\cite{lusztig94,lusztig98}]\label{defn_tnn_Fl}
Let $0 \le k \le n$. We say that an element of $\Fl_n(\mathbb{C})$ is {\itshape totally positive} (respectively, {\itshape totally nonnegative}) if all its Pl\"{u}cker coordinates are real and positive (respectively, nonnegative), up to rescaling. This defines the totally positive part $\Fl_n^{>0}$ and the totally nonnegative part $\Fl_n^{\ge 0}$. (This definition is different from, but equivalent to, the original definition of Lusztig \cite{lusztig94,lusztig98}; see \cite[Section 1.4]{bloch_karp23} for references and a history of this equivalence.)

We define the {\itshape totally positive} part $\Orbit_{\bflambda}^{>0}$ to be the inverse image of $\Fl_n^{>0}$ under the isomorphism \eqref{orbit_complete_isomorphism_intro}. We similarly define the {\itshape totally nonnegative} part $\Orbit_{\bflambda}^{\ge 0}$.
\end{defn}

\begin{eg}\label{eg_tnn_Fl}
The element $g\in\Fl_3(\mathbb{C})$ from \cref{eg_Fl} is totally positive if and only if $a, b, c, ac-b>0$.
\end{eg}

\section{The general twist map}\label{sec_general_twist}

\noindent In this section we construct an involution $\twist$ on $\Fl_n(\mathbb{C})$, which we call the {\itshape twist map}. This generalizes the totally nonnegative twist map we defined on $\Fl_n^{\ge 0}$ in \cite{bloch_karp23b}; see \cref{general_vs_positive_twist}. We refer to \cite[Section 3.3]{bloch_karp23b} for further motivation and a discussion of related work.

We begin by recalling the Bruhat decomposition of $\Fl_n(\mathbb{C})$; for further details, see, e.g., \cite[Section 1.2]{brion05}.
\begin{defn}\label{defn_schubert_cell}
Given $n\in\mathbb{N}$, let $\mathfrak{S}_n$ denote the symmetric group of all permutations of $[n]$. For $w\in\mathfrak{S}_n$, we define the {\itshape (signed) permutation matrix} $\mathring{w}\in\GL_n(\mathbb{C})$ by $$
\mathring{w}_{i,j} := \begin{cases}
\pm 1, & \text{ if $i = w(j)$}; \\
0, & \text{ otherwise},
\end{cases}\quad \text{ for } 1 \le i,j \le n,
$$
where the signs are chosen so that all left-justified minors of $\mathring{w}$ are nonnegative. Note that
\begin{align}\label{permutation_matrix_inverse}
\mathring{(w^{-1})} = \delta_n(\mathring{w})^{-1}\delta_n, \quad \text{ where } \delta_n := \Diag{1, -1, 1, \dots, (-1)^{n-1}}.
\end{align}
We also regard $\mathring{w}$ as an element of $\Fl_n(\mathbb{C})$, and define the {\itshape Schubert cell}
$$
\mathring{X}^w := \B_n(\mathbb{C})\cdot\mathring{w}\subseteq\Fl_n(\mathbb{C}),
$$
which consists of all $V\in\Fl_n(\mathbb{C})$ such that for all $1 \le k \le n-1$, the lexicographically maximal $I\in\binom{[n]}{k}$ such that $\Delta_I(V)\neq 0$ is $I = w([k])$. We have the {\itshape Bruhat decomposition}
$$
\Fl_n(\mathbb{C}) = \bigsqcup_{w\in\mathfrak{S}_n}\mathring{X}^w.
$$
\end{defn}

We now define the general twist map.
\begin{defn}\label{defn_general_twist}
Given $n\in\mathbb{N}$, define the involution $\iota : \GL_n(\mathbb{C}) \to \GL_n(\mathbb{C})$ (called the {\itshape positive inverse}) by
$$
\iota(g) := \delta_n g^{-1}\delta_n.
$$
In other words, $\iota(g)_{i,j} = (-1)^{i+j}(g^{-1})_{i,j}$ for $1 \le i,j \le n$.

Given $V\in\Fl_n(\mathbb{C})$, we define a canonical representative $g\in\U_n$ of $V$ as follows: if $V\in\mathring{X}^w$ (where $w\in\mathfrak{S}_n$), then
\begin{align}\label{canonical_schubert_representative}
\Delta_{w[k]}(g) \in\mathbb{R}_{>0} \quad \text{ and } \quad \Delta_I(g) = 0 \text{ for all } I\in\textstyle\binom{[n]}{k} \text{ with } I >_{\text{lex}} w([k])
\end{align}
for all $1 \le k \le n$. We let $\twist(V)\in\Fl_n(\mathbb{C})$ denote the complete flag represented by $\iota(g)$. This defines the {\itshape (Iwasawa) twist map} $\twist : \Fl_n(\mathbb{C})\to\Fl_n(\mathbb{C})$.
\end{defn}

\begin{rmk}\label{twist_name_remark}
The name {\itshape twist map} is motivated by the twist maps defined by Berenstein, Fomin, and Zelevinsky on $\N_n(\mathbb{C})$ and $\GL_n(\mathbb{C})$ \cite{berenstein_fomin_zelevinsky96,fomin_zelevinsky99}. The key difference is that our map $\twist$ is based on the Iwasawa (or $QR$-) decomposition of $\GL_n(\mathbb{C})$, rather than the Bruhat decomposition. A different twist map was defined on $\Fl_n(\mathbb{C})$ in the latter sense by Galashin and Lam \cite{galashin_lam}.
\end{rmk}

\begin{eg}\label{eg_general_twist}
Let
$$
g := \frac{1}{2}\begin{bmatrix}
\sqrt{2} & -1 & 1 \\[2pt]
\sqrt{2} & 1 & -1 \\[2pt]
0 & \sqrt{2} & \sqrt{2}
\end{bmatrix}\in\U_3, \quad \text{ whence } \iota(g) = \delta_3g^{-1}\delta_3 = \frac{1}{2}\begin{bmatrix}
\sqrt{2} & -\sqrt{2} & 0 \\[2pt]
1 & 1 & -\sqrt{2} \\[2pt]
1 & 1 & \sqrt{2}
\end{bmatrix}.
$$
We can verify that $g$ satisfies \eqref{canonical_schubert_representative} with $w := 231\in\mathfrak{S}_3$. Therefore $\twist:\Fl_3(\mathbb{C})\to\Fl_3(\mathbb{C})$ takes the complete flag represented by $g$ to the complete flag represented by $\iota(g)$. Note that $\iota(g)$ satisfies \eqref{canonical_schubert_representative} for the permutation $w^{-1} = 312$, in agreement with \cref{twist_involution}.
\end{eg}

\begin{prop}\label{twist_involution}
The twist map $\twist$ on $\Fl_n(\mathbb{C})$ is an involution. For each $w\in\mathfrak{S}_n$, it restricts to a diffeomorphism $\mathring{X}^w \to \mathring{X}^{w^{-1}}$.
\end{prop}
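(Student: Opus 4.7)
The plan is to show that $\iota(g)$ is itself the canonical representative of the flag $\iota(g)\B_n(\mathbb{C})$ with respect to $w^{-1}$, from which the proposition follows immediately. For well-definedness of the canonical representative, observe that the unitary representatives of a fixed flag $V$ form a single orbit of the diagonal unitary subgroup $\B_n(\mathbb{C}) \cap \U_n$ under right multiplication (by the Iwasawa/$QR$-decomposition); and under $g \mapsto g\Diag{t_1, \dots, t_n}$ with each $|t_i| = 1$, the scaling $\Delta_{w[k]}(g\Diag{t_1, \dots, t_n}) = t_1 t_2 \cdots t_k \cdot \Delta_{w[k]}(g)$ shows that requiring positivity of each $\Delta_{w[k]}(\cdot)$ pins down the partial products $t_1\cdots t_k$, and hence each $t_k$, sequentially. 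To show $\iota(g)\B_n(\mathbb{C}) \in \mathring{X}^{w^{-1}}$: the positive inverse $\iota$ is a group anti-involution of $\GL_n(\mathbb{C})$ (i.e.\ $\iota(g_1 g_2) = \iota(g_2)\iota(g_1)$), preserves $\B_n(\mathbb{C})$ setwise (since conjugation by the diagonal matrix $\delta_n$ preserves upper triangularity and $\B_n(\mathbb{C})$ is closed under inversion), and sends $\mathring{w}$ to $\mathring{w^{-1}}$ by \eqref{permutation_matrix_inverse}. Hence $\iota$ bijects the Bruhat double coset $\B_n(\mathbb{C})\mathring{w}\B_n(\mathbb{C})$ onto $\B_n(\mathbb{C})\mathring{w^{-1}}\B_n(\mathbb{C})$; as $g$ lies in the former, $\iota(g)\B_n(\mathbb{C}) \in \mathring{X}^{w^{-1}}$. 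Moreover $\iota(g) = \delta_n g^*\delta_n \in \U_n$, using $g^{-1} = g^*$.

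The main obstacle is to show that $\iota(g)$ satisfies the positivity part of \eqref{canonical_schubert_representative} for $w^{-1}$, namely $\Delta_{w^{-1}[k]}(\iota(g)) > 0$ for all $k$ (the vanishing conditions are automatic from the previous paragraph). Starting from $\iota(g)_{i,j} = (-1)^{i+j}(g^{-1})_{i,j}$ and combining with the Jacobi/cofactor identity $\Delta_{I,J}(g^{-1}) = (-1)^{\sum I + \sum J}\Delta_{J^c, I^c}(g)/\det(g)$, the signs collapse to the clean formula $\Delta_{I,J}(\iota(g)) = \Delta_{J^c, I^c}(g)/\det(g)$. Using $\det(g) = 1$ (from $\Delta_{w[n]}(g) = \det(g) > 0$ together with $|\det g| = 1$), specializing to $I = w^{-1}[k]$ and $J = [k]$ (so $J^c = \{k+1,\dots,n\}$ and $I^c = w^{-1}(\{k+1,\dots,n\})$) yields
$$
\Delta_{w^{-1}[k]}(\iota(g)) = \Delta_{\{k+1, \dots, n\},\, w^{-1}(\{k+1, \dots, n\})}(g).
$$
The crux is to show that this ``lower-right'' minor of $g$ is positive. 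For this I would combine the unitary symmetry $\Delta_{I,J}(g) = (-1)^{\sum I + \sum J}\overline{\Delta_{I^c, J^c}(g)}$ (valid when $\det g = 1$) with the \eqref{canonical_schubert_representative} positivity and Schubert-cell vanishing for $g$; a careful sign analysis, together with a reverse induction on $k$ (base case $k = n$ trivial, where the empty minor is $1$), should complete the argument.

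Once the previous paragraph is established, $\iota(g)$ is the unique canonical representative of $\twist(V) \in \mathring{X}^{w^{-1}}$, so $\twist$ restricts to a well-defined map $\mathring{X}^w \to \mathring{X}^{w^{-1}}$. The involution property follows immediately: $\twist(\twist(V)) = \iota(\iota(g))\B_n(\mathbb{C}) = g\B_n(\mathbb{C}) = V$, using $\iota^2 = \id$ on $\GL_n(\mathbb{C})$. For the diffeomorphism statement, the canonical representative $g$ depends smoothly on $V$ within each Schubert cell (by smoothness of the Iwasawa decomposition and of the explicit phase-adjustment formulas $t_1\cdots t_k = \overline{\Delta_{w[k]}(g_0)}/|\Delta_{w[k]}(g_0)|$ applied to any smooth local section $V \mapsto g_0$), and $\iota$ is manifestly smooth on $\GL_n(\mathbb{C})$; hence $\twist|_{\mathring{X}^w}$ is smooth with smooth inverse $\twist|_{\mathring{X}^{w^{-1}}}$, completing the proof.
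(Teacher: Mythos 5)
Your reduction steps are correct as far as they go: the anti-involution property of $\iota$ together with \eqref{permutation_matrix_inverse} does show $\twist(V)\in\mathring{X}^{w^{-1}}$, the identity $\Delta_{I,J}(\iota(g)) = \Delta_{J^c,I^c}(g)/\det(g)$ with $\det(g)=1$ is right, and your formula $\Delta_{w^{-1}([k])}(\iota(g)) = \Delta_{\{k+1,\dots,n\},\,w^{-1}(\{k+1,\dots,n\})}(g)$ checks out. But the crux of the proposition is exactly the positivity of these minors, and there you have not given a proof — only a plan (``unitary symmetry \dots careful sign analysis \dots reverse induction \dots should complete the argument''). Moreover, the tool you propose does not obviously close the loop: the complementary-minor identity $\Delta_{I,J}(g) = (-1)^{\sum I+\sum J}\overline{\Delta_{I^c,J^c}(g)}$ only trades the unknown lower-right minor for $\overline{\Delta_{[k],\,w^{-1}([k])}(g)}$, a minor taken in the \emph{top} $k$ rows, whose phase is not controlled by \eqref{canonical_schubert_representative} (which constrains only left-justified minors, i.e.\ column sets $[k]$). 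So the identity just moves the problem around, and it is not clear that any sign bookkeeping based on it, with or without induction on $k$, can succeed without further structural input. This is a genuine gap, since without the positivity $\Delta_{w^{-1}([k])}(\iota(g))>0$ you cannot conclude that $\iota(g)$ is the canonical representative of $\twist(V)$, and the involution property $\twist(\twist(V))=V$ collapses.

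The missing input is the Bruhat factorization used quantitatively, which is how the paper finishes. Write $g = b\mathring{w}b'$ with $b,b'\in\B_n(\mathbb{C})$ (you already placed $g$ in this double coset). Cauchy--Binet \eqref{cauchy-binet} with the triangularity of $b,b'$ gives $\Delta_{w([k])}(g) = \Delta_{w([k]),w([k])}(b)\,\Delta_{[k]}(b')$, so \eqref{canonical_schubert_representative} yields $b_{w(k),w(k)}b'_{k,k} > 0$ for every $k$. Applying the same identity to $\iota(g) = (\delta_n b'^{-1}\delta_n)\,\mathring{(w^{-1})}\,(\delta_n b^{-1}\delta_n)$ gives
$$
\Delta_{w^{-1}([k])}(\iota(g)) = \prod_{i=1}^k\big(b'_{w^{-1}(i),w^{-1}(i)}\,b_{i,i}\big)^{-1} = \prod_{j\in w^{-1}([k])}\big(b_{w(j),w(j)}\,b'_{j,j}\big)^{-1} > 0,
$$
the key point being that after reindexing the factors regroup into exactly the pairs already known to be positive. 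With this substituted for your ``sign analysis'' paragraph, the rest of your argument (well-definedness of the canonical representative, the cell-to-cell statement, $\iota^2=\id$, and smoothness on each cell) goes through and matches the paper's proof.
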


\begin{proof}
Let $w\in\mathfrak{S}_n$. The twist map restricted to $\mathring{X}^w$ is smooth, so it suffices to show that given $V\in\mathring{X}^w$, we have $\twist(V)\in\mathring{X}^{w^{-1}}$ and that $\twist(\twist(V)) = V$.

Let $g\in\U_n$ be the canonical representative of $V$ as in \eqref{canonical_schubert_representative}, and write
$$
g = b\mathring{w}b', \quad \text{ where } b,b'\in\B_n(\mathbb{C}).
$$
By the Cauchy--Binet identity \eqref{cauchy-binet} and since $b'$ is upper-triangular, we have
\begin{align}\label{bb_determinant_identity}
\Delta_{w([k])}(g) = \sum_{I,J\in\binom{[n]}{k}}\Delta_{w([k]),I}(b)\Delta_{I,J}(\mathring{w})\Delta_J(b') = \Delta_{w([k]),w([k])}(b)\Delta_{[k]}(b')
\end{align}
for all $0 \le k \le n$. Then by \eqref{canonical_schubert_representative}, we obtain
\begin{align}\label{bb_inequality}
b_{w(k),w(k)}b'_{k,k} = \frac{\Delta_{w([k])}(g)}{\Delta_{w([k-1])}(g)} > 0 \quad \text{ for all } 1 \le k \le n.
\end{align}

By \eqref{permutation_matrix_inverse}, we have
$$
\iota(g) = \delta_n g^{-1}\delta_n = (\delta_nb'^{-1}\delta_n)\mathring{(w^{-1})}(\delta_nb^{-1}\delta_n),
$$
so $\twist(V)\in \mathring{X}^{w^{-1}}$. Also, for all $1 \le k \le n$, applying \eqref{bb_determinant_identity} to $\Delta_{w^{-1}([k])}(\iota(g))$ gives
$$
\Delta_{w^{-1}([k])}(\iota(g)) = \Delta_{w^{-1}([k]),w^{-1}([k])}(\delta_nb'^{-1}\delta_n)\Delta_{[k]}(\delta_nb^{-1}\delta_n) = \prod_{i=1}^k(b'_{w^{-1}(i),w^{-1}(i)}b_{i,i})^{-1} > 0,
$$
using \eqref{bb_inequality}. Hence $\iota(g)$ is the canonical representative of $\twist(V)$ as in \eqref{canonical_schubert_representative}, and since $\iota$ is an involution, we obtain $\twist(\twist(V)) = V$.
\end{proof}

\begin{rmk}\label{schubert_vs_opposite}
We could just as well have defined the twist map $\twist$ using the decomposition of $\Fl_n(\mathbb{C})$ into {\itshape opposite Schubert cells}
$$
\mathring{X}_w := \Bminus_n(\mathbb{C})\cdot\mathring{w}\subseteq\Fl_n(\mathbb{C}) \quad \text{ for } w\in\mathfrak{S}_n,
$$
rather than Schubert cells. The resulting twist map would be different from the one in \cref{defn_general_twist}; the relationship between the two maps can be derived from the fact that $\Bminus_n(\mathbb{C})$ is equal to $\B_n(\mathbb{C})$ conjugated by $\mathring{w_0}$, where $w_0 := (i \mapsto n+1-i)\in\mathfrak{S}_n$. These conventions are ultimately not important for our purposes, because we take absolute values in \cref{extended_moser}.
\end{rmk}

\begin{rmk}\label{general_vs_positive_twist}
In \cref{defn_general_twist}, we have defined the twist map $\twist$ on $\Fl_n(\mathbb{C})$ in a piecewise manner, based on the Bruhat decomposition. While $\twist$ defines a smooth map on each Schubert cell $\mathring{X}^w\subseteq\Fl_n(\mathbb{C})$, in general $\twist$ is not continuous when passing between cells. However, $\twist$ displays remarkable positivity properties, as we explored in \cite{bloch_karp23b}; in particular, it restricts to an involution on the totally nonnegative part $\Fl_n^{\ge 0}$, which extends to a smooth map in an open neighborhood inside $\Fl_n(\mathbb{R})$ \cite[Definition 3.21]{bloch_karp23b}. We emphasize, however, that such a smooth extension differs from the general twist map $\twist$ outside of $\Fl_n^{\ge 0}$, since $\twist$ is not necessarily continuous on the boundary of $\Fl_n^{\ge 0}$. For example, let $g(t)\in\Fl_3(\mathbb{C})$ be represented by the matrix $g$ from \cref{eg_general_twist}, with the $(3,1)$-entry replaced by $t\in\mathbb{R}$. Then $g(0)\in\Fl_3^{\ge 0}$, and $g(t)\notin\Fl_3^{\ge 0}$ for all $t<0$. We have $\lim_{t\to 0}g(t) = g(0)$, but we can verify that
$$
\lim_{t\to 0,\, t<0}\twist(g(t)) = \frac{1}{2}\begin{bmatrix}
-\sqrt{2} & \sqrt{2} & 0 \\[2pt]
-1 & -1 & \sqrt{2} \\[2pt]
1 & 1 & \sqrt{2}
\end{bmatrix}\neq \twist(g(0))\in\Fl_3(\mathbb{C}).
$$
\end{rmk}

Finally, we observe that $\twist$ defines an involution on $\Orbit_{\bflambda}\cong\Fl_n(\mathbb{C})$ under the isomorphism \eqref{orbit_complete_isomorphism_intro}.
\begin{defn}\label{defn_twist_orbit}
We define $\twistorbit : \Orbit_{\bflambda} \to \Orbit_{\bflambda}$ as the involution on $\Orbit_{\bflambda}$ induced by the involution $\twist$ on $\Fl_n(\mathbb{C})$, via the isomorphism \eqref{orbit_complete_isomorphism_intro}. Explicitly,
$$
\twistorbit(g(\ii\Diag{\bflambda})g^{-1}) := \iota(g)(\ii\Diag{\bflambda})(\iota(g))^{-1} = \delta_ng^{-1}(\ii\Diag{\bflambda})g\delta_n
$$
for all $g\in\U_n$ satisfying \eqref{canonical_schubert_representative} (for some $w\in\mathfrak{S}_n$ and all $1 \le k \le n$).
\end{defn}

\section{The symmetric Toda flow as a twisted gradient flow}\label{sec_toda_gradient}

\noindent In this section we use the twist map $\twistorbit$ to show that the symmetric Toda flow is a twisted gradient flow on $\Orbit_{\bflambda}$ in the K\"{a}hler metric (see \cref{full_symmetric_toda_gradient}). This generalizes \cite[Theorem 8.6(ii)]{bloch_karp23b}, where we proved the same result restricted to the totally nonnegative part $\Orbit_{\bflambda}^{\ge 0}$; we refer to \cite[Section 8]{bloch_karp23b} for further discussion and context.

\subsection{Background on the K\"{a}hler metric and gradient flows}\label{sec_gradient_background}
We begin by recalling background on the K\"{a}hler metric on $\Orbit_{\bflambda}$ and gradient flows, following \cite[Section 5]{bloch_karp23b}.
\begin{defn}\label{defn_metric}
Let $\nu$ denote the {\itshape Killing form} on $\gl_n(\mathbb{C})$, given by
$$
\nu(L,M) := 2n\tr(LM) - 2\tr(L)\tr(M) \quad \text{ for all } L,M\in\gl_n(\mathbb{C}).
$$
Then $-\nu(\cdot,\cdot)$ defines a $[\cdot,\cdot]$-invariant pairing (i.e.\ $\nu(\ad_L(M),N) = -\nu(M,\ad_L(N))$) which is positive semidefinite on $\uu_n$.

Now let $L\in\Orbit_{\bflambda}$. For $X\in\uu_n$, we define $X^L$ and $X_L$ by the (unique) decomposition
$$
X = X^L + X_L, \quad \text{ where $X^L\in\im(\ad_L)$ and $X_L\in\ker(\ad_L)$}.
$$
The {\itshape normal metric} on $\Orbit_{\bflambda}$ is given at $L\in\Orbit_{\bflambda}$ by
$$
\langle [L,X],[L,Y]\rangle_{\textnormal{normal}} := -\nu(X^L,Y^L)
$$
for all tangent vectors $[L,X]$ and $[L,Y]$ at $L$. Then the {\itshape K\"{a}hler metric} on $\Orbit_{\bflambda}$ is given at $L\in\Orbit_{\bflambda}$ by
$$
\langle [L,X],[L,Y]\rangle_{\textnormal{K\"{a}hler}} := \langle\textstyle\sqrt{-\hspace*{-2pt}\ad_L^2}([L,X]),[L,Y]\rangle_{\textnormal{normal}},
$$
where $\sqrt{-\hspace*{-2pt}\ad_L^2}$ denote the positive square root of the positive semidefinite operator $-\hspace*{-2pt}\ad_L^2$.
\end{defn}

\begin{defn}\label{defn_gradient_flow}
Given $N\in\uu_n$, we define the {\itshape gradient flow on $\Orbit_{\bflambda}$ with respect to $N$} (in a particular Riemannian metric) as the flow given by
$$
\dot{L}(t) = \grad(H)(L(t)), \quad \text{ where } H(M) := \nu(M,N) \text{ for all } M\in\Orbit_{\bflambda}.
$$
\end{defn}

We have the following explicit description of gradient flows on $\Orbit_{\bflambda}$ in the K\"{a}hler metric:
\begin{prop}[{\cite[Section 3]{duistermaat_kolk_varadarajan83}; \cite[Appendix]{guest_ohnita93}}]\label{gradient_flow_kahler}
Let $L(t)$ evolve according to the gradient flow on $\Orbit_{\bflambda}$ beginning at $L_0$ with respect to $N\in\uu_n$ in the K\"{a}hler metric, and let $V(t)\in\Fl_n(\mathbb{C})$ be the corresponding complete flag under the isomorphism \eqref{orbit_complete_isomorphism_intro}, with $V_0 := V(0)$. Then
\begin{align}\label{gradient_flow_kahler_equation}
V(t) = \exp(t\ii N)V_0 \quad \text{ for all } t\in\mathbb{R}.
\end{align}
Letting $g(t)\in\U_n$ be any representative of $V(t)$, we have $L(t) = g(t)(\ii\Diag{\bflambda})g(t)^{-1}$. Explicitly, we can take $g_0\in\U_n$ representing $V_0$, and then take
\begin{align}\label{gradient_flow_kahler_equation_iwasawa}
g(t) = \Kterm(\exp(t\ii N)g_0) \quad \text{ for all } t\in\mathbb{R}.
\end{align}

\end{prop}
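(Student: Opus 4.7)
The second and third assertions of the proposition follow immediately once the flag-variety identity $V(t) = \exp(t\ii N)V_0$ is established. The formula $L(t) = g(t)(\ii\Diag{\bflambda})g(t)^{-1}$ is just the inverse of the isomorphism \eqref{orbit_complete_isomorphism_intro} applied pointwise to $V(t)$, and $\Kterm(\exp(t\ii N)g_0)$ is a valid unitary representative because $\exp(t\ii N)g_0\in\GL_n(\mathbb{C})$ already represents $V(t)$ in the quotient $\GL_n(\mathbb{C})/\B_n(\mathbb{C})$, while its Iwasawa (or $QR$-) factor lies in $\U_n$ and represents the same flag. So the heart of the proof is to identify the gradient flow curve itself.

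My plan is to deduce this from the general Hamiltonian/gradient correspondence on a K\"ahler manifold, applied to the KKS structure on $\Orbit_{\bflambda}$. The orbit carries the Kirillov--Kostant--Souriau symplectic form $\omega_L([L,X],[L,Y]) = \nu(L,[X,Y])$, a complex structure $J$ induced from $\Fl_n(\mathbb{C}) = \GL_n(\mathbb{C})/\B_n(\mathbb{C})$ via \eqref{orbit_complete_isomorphism_intro}, and the compatible K\"ahler metric of \cref{defn_metric}. First I would compute the Hamiltonian vector field $X_H$ of $H(M) = \nu(M,N)$: matching $\iota_{X_H}\omega = dH$ and using the $\ad$-invariance of $\nu$ yields $X_H\rvert_L = [N,L]$, which corresponds under \eqref{orbit_complete_isomorphism_intro} to the $\U_n$-flow $V(t) = \exp(tN)V_0$. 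Next I would invoke the K\"ahler identity $\grad H = J\cdot X_H$; since the complex structure on $\GL_n(\mathbb{C})/\B_n(\mathbb{C})$ acts as multiplication by $\ii$ on $\gl_n(\mathbb{C})/\b_n(\mathbb{C})$ at each tangent space, $J\cdot X_H$ is the infinitesimal generator of the curve $V(t) = \exp(t\ii N)V_0$, as required.

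The main obstacle is verifying that the intrinsic complex structure on $\Orbit_{\bflambda}$ encoded by the $\sqrt{-\hspace*{-2pt}\ad_L^2}$ factor in \cref{defn_metric} really matches the extrinsic complex structure on $\Fl_n(\mathbb{C})$ under \eqref{orbit_complete_isomorphism_intro}. Concretely, on $\im(\ad_L) \subseteq \uu_n$ one checks that $(\sqrt{-\hspace*{-2pt}\ad_L^2})^{-1}\ad_L$ squares to $-\id$ and agrees with multiplication by $\ii$ modulo $\b_n(\mathbb{C})$; this compatibility is precisely the content of the cited works \cite{duistermaat_kolk_varadarajan83,guest_ohnita93}, on which I would rely. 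As a direct sanity check, differentiating $g(t) = \Kterm(\exp(t\ii N)g_0)$ via the Iwasawa decomposition produces $\dot g(0)g_0^{-1} = g_0\kterm(g_0^{-1}\ii N g_0)g_0^{-1}$, and one can then verify that the resulting tangent vector $\dot L(0) = [g_0\kterm(g_0^{-1}\ii N g_0)g_0^{-1},L_0]$ satisfies the gradient identity $\langle\dot L(0),[L_0,Y]\rangle_{\textnormal{K\"{a}hler}} = \nu([L_0,Y],N)$ for every $Y\in\uu_n$, closing the loop.
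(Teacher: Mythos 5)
The paper offers no proof of \cref{gradient_flow_kahler} at all --- it is quoted from \cite{duistermaat_kolk_varadarajan83,guest_ohnita93} --- and your sketch is exactly the standard argument from those sources: the Hamiltonian vector field of $H=\nu(\cdot\,,N)$ for the KKS form is $[N,L]$, and rotating by the invariant complex structure turns the $\exp(tN)$-orbit into the $\exp(t\ii N)$-orbit, with the one genuinely nontrivial ingredient (that the metric built from $\sqrt{-\ad_L^2}$ in \cref{defn_metric} is the K\"ahler metric for the complex structure transported from $\Fl_n(\mathbb{C})$ via \eqref{orbit_complete_isomorphism_intro}) likewise delegated to the same citations. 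So the proposal is consistent with the paper's treatment; the only point to watch is the sign convention in $\grad H = J X_H$ (it is what decides $\exp(t\ii N)$ versus $\exp(-t\ii N)$), which your concluding check --- differentiating $g(t)=\Kterm(\exp(t\ii N)g_0)$ and verifying $\langle \dot L(0),[L_0,Y]\rangle_{\textnormal{K\"ahler}}=\nu([L_0,Y],N)$ --- would pin down, but which you state without actually carrying out.
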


We observe that (as will be useful later) in the case that $N$ is a diagonal matrix, we can explicitly describe the Pl\"{u}cker coordinates of the element $V(t)$ in \eqref{gradient_flow_kahler_equation} in terms of those of $V_0$.
\begin{lem}\label{diagonal_pluckers}
Let $N := -\ii\Diag{c_1, \dots, c_n}\in\uu_n$, and suppose that $V(t)$ evolves according to \eqref{gradient_flow_kahler_equation}. Then
$$
\Delta_I(V(t)) = e^{(\sum_{i\in I}c_i)t}\Delta_I(V_0) \quad \text{ for all } I\subseteq [n].
$$
\end{lem}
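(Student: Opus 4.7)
The plan is to exploit the fact that $\ii N$ is itself diagonal. Explicitly, $\ii N = \ii\cdot(-\ii\Diag{c_1,\dots,c_n}) = \Diag{c_1,\dots,c_n}$, and so $\exp(t\ii N) = \Diag{e^{tc_1},\dots,e^{tc_n}}$. Fixing any $g_0 \in \GL_n(\mathbb{C})$ representing $V_0$, the product $\exp(t\ii N)\,g_0$ represents $V(t)$ by \eqref{gradient_flow_kahler_equation}.

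Next I would read off the Pl\"ucker coordinates of this representative via the Cauchy--Binet identity \eqref{cauchy-binet} with $k := |I|$:
\[
\Delta_I(\exp(t\ii N)\,g_0) = \sum_{K\in\binom{[n]}{k}} \Delta_{I,K}(\exp(t\ii N))\,\Delta_K(g_0).
\]
Because $\exp(t\ii N)$ is diagonal, the minor $\Delta_{I,K}(\exp(t\ii N))$ vanishes unless $K = I$, in which case it equals $\prod_{i\in I}e^{tc_i} = e^{(\sum_{i\in I}c_i)t}$. The sum therefore collapses to the single term $e^{(\sum_{i\in I}c_i)t}\,\Delta_I(g_0)$, which is the claimed formula.

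The only point that needs a brief comment is the projective ambiguity: the Pl\"ucker coordinates of a flag are defined only up to one overall scalar per degree $k$. The identity above is nevertheless meaningful because the factor $e^{(\sum_{i\in I}c_i)t}$ genuinely depends on $I$, so the formula describes the transformation of the Pl\"ucker vector once the common scaling has been fixed by choosing $\exp(t\ii N)\,g_0$ as the representative of $V(t)$ derived from $g_0$. There is no real obstacle here; the lemma is essentially a one-line consequence of Cauchy--Binet for a diagonal left factor.
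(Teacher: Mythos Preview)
Your argument is correct and is essentially the same as the paper's: the paper simply notes that $\exp(t\ii N) = \Diag{e^{c_1t}, \dots, e^{c_nt}}$ and declares the result a direct calculation, which is exactly the Cauchy--Binet computation you spell out.
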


\begin{proof}
This follows by a direct calculation, since $\exp(t\ii N) = \Diag{e^{c_1t}, \dots, e^{c_nt}}$.
\end{proof}

\begin{rmk}\label{kahler_flow_decomposition}
We mention that in addition to \eqref{gradient_flow_kahler_equation_iwasawa}, there is another way to obtain an explicit solution to $L(t)$. Namely, let $V(t) = (V_1(t), \dots, V_{n-1}(t))$ be as in \cref{gradient_flow_kahler}. Then as in \eqref{projection_sum_formula}, we write
\begin{align}\label{kahler_flow_decomposition_equation}
-\ii L(t) = \bigg(\sum_{k=1}^{n-1}(\lambda_k - \lambda_{k+1})P_k(t)\bigg) + \lambda_n\I_n,
\end{align}
where $P_k(t)\in\gl_n(\mathbb{C})$ is orthogonal projection onto $V_k(t)$. Regarding elements of $\Gr_{k,n}(\mathbb{C})$ as $n\times k$ matrices (as in \cref{defn_grassmannian}), we have the formula
\begin{multline*}
P_k(t) = V_k(t)(\adjoint{V_k(t)}V_k(t))^{-1}\adjoint{V_k(t)} \\
= \exp(t\ii N)(V_0)_k(\adjoint{(V_0)_k}\exp(2t\ii N)(V_0)_k)^{-1}\adjoint{(V_0)_k}\exp(t\ii N).
\end{multline*}
This leads (via \eqref{kahler_flow_decomposition_equation}) to an explicit expression for $L(t)$  which does not require computing a $QR$-decomposition, as in \eqref{gradient_flow_kahler_equation_iwasawa}.
\end{rmk}

\subsection{The symmetric Toda flow}\label{sec_symmetric_toda}
We now show that the symmetric Toda flow \eqref{toda_flaschka_skew} on $\Orbit_{\bflambda}$ is a twisted gradient flow. This generalizes \cite[Theorem 8.6(ii)]{bloch_karp23b}, where we proved the same result restricted to the totally nonnegative part $\Orbit_{\bflambda}^{\ge 0}$; essentially the same proof applies, using the general twist map $\twistorbit$ defined in \cref{sec_general_twist}.
\begin{thm}\label{full_symmetric_toda_gradient}
Set $N := -\ii\Diag{\bflambda}\in\uu_n$. The symmetric Toda flow \eqref{toda_flaschka_skew} restricted to $\Orbit_{\bflambda}$ is the twisted gradient flow with respect to $N$ in the K\"{a}hler metric. That is, if $L(t)$ evolves according to \eqref{toda_flaschka_skew} beginning at $L_0\in\Orbit_{\bflambda}$, then $\twistorbit(L(t))$ is the gradient flow with respect to $N$ in the K\"{a}hler metric beginning at $\twistorbit(L_0)\in\Orbit_{\bflambda}$ (cf.\ \cref{defn_gradient_flow} and \cref{gradient_flow_kahler}). In the notation of \cref{gradient_flow_kahler}, we have
\begin{align}\label{full_symmetric_toda_gradient_formula}
V(t) = \twist(\Diag{e^{\lambda_1t}, \dots, e^{\lambda_nt}}\cdot\twist(V_0)) \quad \text{ for all } t\in\mathbb{R}.
\end{align}
\end{thm}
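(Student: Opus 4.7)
The plan is to verify \eqref{full_symmetric_toda_gradient_formula} directly by combining Symes's factorization solution of the Toda flow with an explicit identification of the canonical representative under that flow; the gradient-flow interpretation then follows from \cref{gradient_flow_kahler}. Let $g_0\in\U_n$ be the canonical representative of $V_0$ satisfying \eqref{canonical_schubert_representative} for some $w\in\mathfrak{S}_n$, so that $L_0 = g_0(\ii\Diag{\bflambda})g_0^{-1}$ and $M_0 := -\ii L_0 = g_0\Diag{\bflambda}g_0^{-1}$. By Symes's theorem (recalled in the introduction), $L(t) = u(t)^{-1}L_0\,u(t)$ where $u(t) := \Kterm(\exp(tM_0))$, so the corresponding flag $V(t)$ is represented by $g(t) := u(t)^{-1}g_0\in\U_n$. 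Writing the Iwasawa/$QR$ factorization $\exp(tM_0) = u(t)r(t)$ and using $\exp(tM_0)g_0 = g_0\Diag{e^{\lambda_1 t},\dots,e^{\lambda_n t}}$, we obtain
\begin{align*}
g(t) = r(t)\,g_0\,\Diag{e^{-\lambda_1 t},\dots,e^{-\lambda_n t}}.
\end{align*}

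The first main step is to show that $g(t)$ is itself the canonical representative of $V(t)$, i.e., it satisfies \eqref{canonical_schubert_representative} for the same $w$. Applying the Cauchy--Binet identity \eqref{cauchy-binet} gives
\begin{align*}
\Delta_I(g(t)) = e^{-(\lambda_1+\cdots+\lambda_k)t}\sum_{J\in\binom{[n]}{k}}\Delta_{I,J}(r(t))\,\Delta_J(g_0) \quad\text{for each } I\in\textstyle\binom{[n]}{k}.
\end{align*}
Upper-triangularity of $r(t)$ forces any nonzero $\Delta_{I,J}(r(t))$ to satisfy $J\ge I$ in the Gale order (i.e., $J_l\ge I_l$ for all $l$ after sorting ascending), which in turn implies $J\ge_{\lex} I$. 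On the other hand, $\Delta_J(g_0) = 0$ whenever $J>_{\lex} w([k])$, by the canonicalness of $g_0$. Taking $I = w([k])$, these two conditions together pin down $J = w([k])$ as the unique surviving term, yielding
\begin{align*}
\Delta_{w([k])}(g(t)) = e^{-(\lambda_1+\cdots+\lambda_k)t}\prod_{i\in w([k])}r(t)_{ii}\cdot\Delta_{w([k])}(g_0) > 0;
\end{align*}
while taking $I>_{\lex} w([k])$ kills the entire sum. Hence $V(t)\in\mathring{X}^w$ and $g(t)$ is canonical.

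The second step is to check that $\iota(g(t)) = \delta_n g_0^{-1}u(t)\delta_n$ and $\Diag{e^{\lambda_1 t},\dots,e^{\lambda_n t}}\iota(g_0) = \delta_n\Diag{e^{\lambda_1 t},\dots,e^{\lambda_n t}}g_0^{-1}\delta_n$ represent the same flag. Using $\exp(-tM_0)u(t) = r(t)^{-1}$ (from the Iwasawa factorization) together with $\delta_n^2 = \I_n$ and the fact that $\delta_n$ commutes with diagonal matrices, a direct multiplication gives
\begin{align*}
\bigl(\Diag{e^{\lambda_1 t},\dots,e^{\lambda_n t}}\iota(g_0)\bigr)^{-1}\,\iota(g(t)) = \delta_n\,r(t)^{-1}\,\delta_n,
\end{align*}
which lies in $\B_n(\mathbb{C})$ since conjugation by the diagonal $\delta_n$ preserves upper-triangularity and the positivity of the diagonal. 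Hence the two matrices represent the same flag, giving $\twist(V(t)) = \Diag{e^{\lambda_1 t},\dots,e^{\lambda_n t}}\cdot\twist(V_0)$. Applying the involution $\twist$ (\cref{twist_involution}) to both sides yields \eqref{full_symmetric_toda_gradient_formula}, and by \cref{gradient_flow_kahler} this identifies the symmetric Toda flow, upon applying $\twistorbit$, with the K\"ahler gradient flow with respect to $N = -\ii\Diag{\bflambda}$.

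The main obstacle is the canonicalness verification in the first step, where upper-triangularity of $r(t)$ (via Gale order) must be carefully matched against the Schubert vanishing conditions satisfied by $g_0$. The remaining calculation in the second step is then a mechanical manipulation of Iwasawa factors.
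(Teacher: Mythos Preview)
Your proof is correct and follows essentially the same approach as the paper: use Symes's $QR$-factorization solution and verify that the canonical representative condition \eqref{canonical_schubert_representative} is preserved along the flow, so that the twist relation can be read off directly. The paper's own proof is just a pointer to \cite[Proof of Theorem 8.6]{bloch_karp23b} together with the single observation that $\Kterm(\exp(t\ii N)g_0)$ remains canonical; you have effectively written out a self-contained version of that argument. The one minor difference is that you verify canonicalness on the \emph{Toda} side (showing $g(t)=r(t)\hspace*{1pt}g_0\hspace*{1pt}\Diag{e^{-\lambda_it}}$ stays in $\mathring{X}^w$ via the Cauchy--Binet/Gale-order argument), whereas the paper checks it on the dual \emph{gradient} side (where left-multiplication by $\Diag{e^{\lambda_it}}$ only rescales Pl\"ucker coordinates, and applying $\Kterm$ then multiplies on the right by an upper-triangular matrix with positive diagonal, making the preservation immediate). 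Both checks are valid and yield the same conclusion.
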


\begin{proof}
The proof is the same as in \cite[Proof of Theorem 8.6]{bloch_karp23b}. We only need to observe that if $g_0\in\U_n$ satisfies \eqref{canonical_schubert_representative} for some $w\in\mathfrak{S}_n$ and all $1 \le k \le n$, then so does $\Kterm(\exp(t\ii N)g_0)$.
\end{proof}

\begin{rmk}\label{toda_gradient_normal}
We note that Bloch \cite[Section 6]{bloch90} showed that the tridiagonal Toda lattice flow \eqref{toda_flaschka_skew} (with $L$ tridiagonal) can be written as
$$
\dot{L} = [L, [L,N]], \quad \text{ where } N := -\ii\Diag{n-1, \dots, 1, 0}\in\uu_n,
$$
In particular, by \cite{brockett91,bloch_brockett_ratiu92}, the tridiagonal Toda flow restricted to $\Orbit_{\bflambda}$ is the gradient flow with respect to $N$ in the {\itshape normal} metric. (However, this result does not directly extend to the full symmetric Toda flow; cf.\ \cite{de_mari_pedroni99}.) It is curious that the Toda lattice flow can be written as a gradient flow in two different metrics in rather different ways.
\end{rmk}

\section{Tridiagonal matrices and the Moser map}\label{sec_moser}

\noindent In this section, we explicitly describe Moser's map \eqref{moser_map_intro} and its connection to the tridiagonal Toda lattice flow. We closely follow \cite[Section 4.4]{bloch_karp23b}. In order to use the framework of adjoint orbits, we work with skew-Hermitian matrices $\ii M$ rather than symmetric (or Hermitian) matrices $M$.
\begin{defn}\label{defn_jacobi_matrix}
We define $\Jac_{\bflambda}^{>0}$ (respectively, $\Jac_{\bflambda}^{\ge 0}$) to be the set of elements $\ii M$ of $\Orbit_{\bflambda}$ such that $M$ is a real tridiagonal matrix with positive (respectively, nonnegative) entries immediately above and below the diagonal. Equivalently, $\Jac_{\bflambda}^{>0}$ (respectively, $\Jac_{\bflambda}^{\ge 0}$) is the set of tridiagonal matrices in $\Orbit_{\bflambda}^{>0}$ (respectively, $\Orbit_{\bflambda}^{\ge 0}$) \cite[Proposition 4.18]{bloch_karp23b}. The space $\Jac_{\bflambda}^{>0}$ is known as an {\itshape isospectral manifold of Jacobi matrices}.
\end{defn}

\begin{defn}\label{defn_krylov}
Let $\mathbb{P}^{n-1}_{>0}$ (respectively, $\mathbb{P}^{n-1}_{\ge 0}$) denote the subset of $\mathbb{P}^{n-1}(\mathbb{C})$ where all coordinates are real and positive (respectively, nonnegative), up to rescaling. Given $x\in\mathbb{P}^{n-1}_{>0}$, we define the {\itshape Vandermonde flag} $\Kry{\bflambda}{x}\in\Fl_n(\mathbb{C})$ as the complete flag $(V_1, \dots, V_{n-1})$, where
$$
V_k := \spn(x, \Diag{\bflambda}x, \dots, \Diag{\bflambda}\hspace*{-1pt}^{k-1}x) \quad \text{ for } 1 \le k \le n-1.
$$
Equivalently, $\Kry{\bflambda}{x}$ is represented by the rescaled Vandermonde matrix $(\lambda_i^{j-1}x_i)_{1 \le i,j \le n} \in\GL_n(\mathbb{C})$. We let $\Moser_{\bflambda}(x)\in\Orbit_{\bflambda}$ denote element corresponding to $\Kry{\bflambda}{x}\in\Fl_n(\mathbb{C})$ under the isomorphism \eqref{orbit_complete_isomorphism_intro}. We call $\Moser_{\bflambda}$ the {\itshape Moser map}, since it essentially appeared (with a different, but equivalent, definition) in \cite{moser75}.
\end{defn}

\begin{thm}[{Moser \cite[Section 3]{moser75}; Bloch and Karp \cite[Corollary 4.24]{bloch_karp23b}}]\label{tridiagonal_flag}
The Moser map $\Moser_{\bflambda} : \mathbb{P}^{n-1}_{>0} \xrightarrow{\cong} \Jac_{\bflambda}^{>0}$ is a diffeomorphism.
\end{thm}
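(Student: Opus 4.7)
The plan is to show that $\Moser_{\bflambda}$ is well-defined and smooth, exhibit a smooth inverse with image $\mathbb{P}^{n-1}_{>0}$, and check that the two maps compose to the identity. Well-definedness is immediate: the rescaled Vandermonde matrix $(\lambda_i^{j-1} x_i)_{i,j}$ has determinant $x_1 \cdots x_n \prod_{i<j}(\lambda_j - \lambda_i)$, which is nonzero for $x \in \mathbb{P}^{n-1}_{>0}$ with distinct $\lambda_i$, so $\Kry{\bflambda}{x}$ is a genuine complete flag. Smoothness of $\Moser_{\bflambda}$ follows because the Gram--Schmidt/Iwasawa decomposition of the Vandermonde matrix depends smoothly on $x$.

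The main content is to show the image lies in $\Jac_{\bflambda}^{>0}$: writing $\ii M = \Moser_{\bflambda}(x)$, one must check that $M$ is real, symmetric, tridiagonal, and has strictly positive entries immediately above and below the diagonal. Reality and symmetry are automatic. For the tridiagonal structure I would orthonormalize the columns of the Vandermonde matrix by Gram--Schmidt, producing $q_1, \dots, q_n$ with $\spn(q_1, \dots, q_k) = V_k$ for each $k$; a direct calculation identifies $q_j = (x_i p_{j-1}(\lambda_i))_{i=1}^n$, where $p_0, p_1, \dots, p_{n-1}$ are the orthonormal polynomials on $\mathbb{R}$ with respect to the discrete measure $\mu_x := \sum_{i=1}^n x_i^2 \delta_{\lambda_i}$. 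The classical three-term recurrence $\lambda\, p_{j-1} = a_{j-1} p_{j-2} + b_j p_{j-1} + a_j p_j$ with $a_j > 0$ then translates, via this change of basis, into the assertion that $M$ is the symmetric tridiagonal matrix with diagonal $(b_1, \dots, b_n)$ and strictly positive subdiagonal $(a_1, \dots, a_{n-1})$.

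For the inverse direction, given $\ii M \in \Jac_{\bflambda}^{>0}$, classical Jacobi matrix theory (essentially Sturm oscillation) guarantees that the eigenvalues of $M$ are simple and equal $\lambda_1 > \cdots > \lambda_n$, and the first entries of the unit-norm eigenvectors $v^{(i)}$ are all nonzero. Fixing signs so that $v^{(i)}_1 > 0$, define $\varphi(\ii M) := (v^{(1)}_1 : \cdots : v^{(n)}_1) \in \mathbb{P}^{n-1}_{>0}$; smoothness of $\varphi$ follows from smooth dependence of simple eigenvectors on matrix entries. The identities $\varphi \circ \Moser_{\bflambda} = \id$ and $\Moser_{\bflambda} \circ \varphi = \id$ then follow from Favard's theorem, which says a Jacobi matrix is uniquely reconstructed from its spectral data via the associated orthogonal-polynomial recurrence. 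Combined with smoothness in both directions, this yields the claimed diffeomorphism.

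The main obstacle is the tridiagonality step: one must carefully unwind the Gram--Schmidt computation to identify the entries of $M$ with the three-term recurrence coefficients of $\{p_j\}$. An alternative route is via the projector formula $M = \sum_{k=1}^{n-1}(\lambda_k - \lambda_{k+1}) P_k + \lambda_n I_n$ from \cref{projection_sum} combined with the Krylov property $\Diag{\bflambda} V_k \subseteq V_{k+1}$; this forces a Hessenberg-type structure that tridiagonalizes under symmetry, but extracting positivity of the off-diagonal entries is most naturally done through the orthogonal-polynomial route.
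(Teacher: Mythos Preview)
The paper does not actually supply a proof of this theorem: it is stated with attribution to Moser \cite[Section 3]{moser75} and to the authors' earlier paper \cite[Corollary 4.24]{bloch_karp23b}, and no argument is given in the present text. So there is nothing in the paper to compare your proposal against line by line.

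That said, your outline is correct and is essentially Moser's classical argument. The identification of the Gram--Schmidt orthonormalization of the Vandermonde columns with the orthonormal polynomials for the discrete measure $\sum_i x_i^2\,\delta_{\lambda_i}$, together with the three-term recurrence, is exactly the mechanism that produces the tridiagonal form with positive off-diagonal entries; and the inverse via first components of normalized eigenvectors, with uniqueness coming from Favard's theorem (equivalently, the Stieltjes/Lanczos reconstruction of a Jacobi matrix from its spectral measure), is the standard way to close the loop. Your alternative route via \eqref{projection_sum_formula} and the Krylov inclusion $\Diag{\bflambda}V_k\subseteq V_{k+1}$ to force Hessenberg-then-tridiagonal structure is also sound, and your remark that positivity of the subdiagonal is most cleanly read off from the orthogonal-polynomial recurrence is accurate. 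The cited reference \cite{bloch_karp23b} recasts the same statement in the language of total positivity (the Vandermonde flag lies in $\Fl_n^{>0}$ for $x\in\mathbb{P}^{n-1}_{>0}$), which gives a different packaging but not a fundamentally different mechanism.
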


\begin{eg}\label{eg_tridiagonal_flag}
Let $\bflambda := (1,0,-1)$. Then the Moser map $\Moser_{\bflambda}$ sends $(x_1 : x_2 : x_3)\in\mathbb{P}^2_{>0}$ to
\begin{gather*}
\ii\begin{bmatrix}
\frac{x_1^2 - x_3^2}{x_1^2 + x_2^2 + x_3^2} & \frac{\sqrt{x_1^2x_2^2 + 4x_1^2x_3^2 + x_2^2x_3^2}}{x_1^2 + x_2^2 + x_3^2} & 0 \\[8pt]
\frac{\sqrt{x_1^2x_2^2 + 4x_1^2x_3^2 + x_2^2x_3^2}}{x_1^2 + x_2^2 + x_3^2} & \frac{(x_1^2 - x_3^2)(x_2^4 - 4x_1^2x_3^2)}{(x_1^2 + x_2^2 + x_3^2)(x_1^2x_2^2 + 4x_1^2x_3^2 + x_2^2x_3^2)} & \frac{2x_1x_2x_3\sqrt{x_1^2 + x_2^2 + x_3^2}}{x_1^2x_2^2 + 4x_1^2x_3^2 + x_2^2x_3^2} \\[8pt]
0 & \frac{2x_1x_2x_3\sqrt{x_1^2 + x_2^2 + x_3^2}}{x_1^2x_2^2 + 4x_1^2x_3^2 + x_2^2x_3^2} & \frac{x_2^2(x_3^2 - x_1^2)}{x_1^2x_2^2 + 4x_1^2x_3^2 + x_2^2x_3^2} \\[4pt]
\end{bmatrix}\in\Jac_{\bflambda}^{>0}.\qedhere
\end{gather*}

\end{eg}

We now discuss the topology of the compact isospectral manifold $\Jac_{\bflambda}^{\ge 0}$.
\begin{defn}\label{defn_permutohedron}
Let $\Perm{\bflambda}\subseteq\mathbb{R}^n$ denote the polytope whose vertices are all $n!$ permutations of $\bflambda = (\lambda_1, \dots, \lambda_n)$. We call $\Perm{\bflambda}$ a {\itshape permutohedron}. We also define the {\itshape moment map}
$$
\mu : \uu_n \to \mathbb{R}^n, \quad \ii M \mapsto \diag(M).
$$
\end{defn}

\begin{eg}\label{eg_permutohedron}
Let $\bflambda := (1,0,-1)$. Then $\Perm{\bflambda}$ is a hexagon in $\mathbb{R}^3$, contained in the hyperplane where all coordinates sum to zero:
\begin{gather*}
\quad\begin{tikzpicture}[baseline=(current bounding box.center)]
\tikzstyle{out1}=[inner sep=0,minimum size=1.2mm,circle,draw=black,fill=black,semithick]
\tikzstyle{vertex}=[inner sep=0,minimum size=1.2mm,circle,draw=red,fill=red,semithick]
\pgfmathsetmacro{\r}{1.20};
\pgfmathsetmacro{\s}{1.0};
\pgfmathsetmacro{\hs}{1.00};
\pgfmathsetmacro{\vs}{0.36};
\draw[thick,fill=black!18](-90:\r)--(-30:\r)--(30:\r)--(90:\r)--(150:\r)--(210:\r)--cycle;
\node[out1](123)at(-90:\r){};
\node[out1](132)at(-30:\r){};
\node[out1](231)at(30:\r){};
\node[out1](321)at(90:\r){};
\node[out1](312)at(150:\r){};
\node[out1](213)at(210:\r){};
\node[inner sep=0]at($(123)+(0,-\vs)$){\scalebox{\s}{$(-1,0,1)$}};
\node[inner sep=0]at($(132)+(\hs,0)$){\scalebox{\s}{$(-1,1,0)$}};
\node[inner sep=0]at($(231)+(\hs,0)$){\scalebox{\s}{$(0,1,-1)$}};
\node[inner sep=0]at($(321)+(0,\vs)$){\scalebox{\s}{$(1,0,-1)$}};
\node[inner sep=0]at($(312)+(-\hs,0)$){\scalebox{\s}{$(1,-1,0)$}};
\node[inner sep=0]at($(213)+(-\hs,0)$){\scalebox{\s}{$(0,-1,1)$}};
\end{tikzpicture}\quad.\qedhere
\end{gather*}

\end{eg}

Tomei \cite[Section 4]{tomei84} showed that $\Jac_{\bflambda}^{\ge 0}$ is homeomorphic to $\Perm{\bflambda}$ (in fact, they are isomorphic as regular CW complexes). We will need the following explicit version of this result due to Bloch, Flashcka, and Ratiu \cite{bloch_flaschka_ratiu90}. It may be regarded as an analogue of the {\itshape Schur--Horn theorem} \cite{schur23,horn54}, which states that $\mu$ maps $\Orbit_{\bflambda}$ surjectively onto $\Perm{\bflambda}$.
\begin{thm}[{Bloch, Flaschka, and Ratiu \cite[Theorem p.\ 60]{bloch_flaschka_ratiu90}; cf.\ \cite[Remark 8.8]{bloch_karp23b}}]\label{jacobi_manifold}
The map
$$
\Jac_{\bflambda}^{\ge 0} \to \Perm{\bflambda}, \quad L \mapsto \mu(\twistorbit(L))
$$
is a homeomorphism which restricts to a diffeomorphism from $\Jac_{\bflambda}^{>0}$ to the interior of $\Perm{\bflambda}$.
\end{thm}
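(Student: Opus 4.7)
The plan is to analyze the map on the open stratum $\Jac_{\bflambda}^{>0}$ via Moser's parameterization \cref{tridiagonal_flag}, showing it is a diffeomorphism onto $\interior{\Perm{\bflambda}}$ there, and then extend to a homeomorphism on all of $\Jac_{\bflambda}^{\ge 0}$ using the block-tridiagonal stratification of Jacobi matrices. The middle step (the diffeomorphism onto the interior) will be the main obstacle, as it encodes the central convex-geometric content of the theorem.

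First I would compute $\mu(\twistorbit(L))$ explicitly for $L = \Moser_{\bflambda}(x)\in\Jac_{\bflambda}^{>0}$. Write $L = g(\ii\Diag{\bflambda})g^{-1}$, where $g\in\U_n$ is the canonical representative of the Krylov flag $\Kry{\bflambda}{x}$ satisfying \eqref{canonical_schubert_representative}; since $\Kry{\bflambda}{x}$ is totally positive it lies in the top Bruhat cell $\mathring{X}^{w_0}$ (its Pl\"{u}cker coordinates, computed from the rescaled Vandermonde, all share the sign $(-1)^{\binom{k}{2}}$ for each $k$), so $g$ is well-defined. By \cref{defn_twist_orbit},
\[
-\ii\,\twistorbit(L) \;=\; \delta_n\, g^{-1}\Diag{\bflambda}\,g\,\delta_n,
\]
and since conjugation by the diagonal matrix $\delta_n$ preserves diagonal entries, we obtain $\mu(\twistorbit(L))_j = \sum_{i=1}^n \lambda_i\,|g_{ij}|^2$ for $1 \le j \le n$. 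The matrix $(|g_{ij}|^2)_{i,j}$ is doubly stochastic by unitarity of $g$, so this vector lies in $\Perm{\bflambda}$ by the Schur--Horn theorem, giving a smooth map $F := \mu\circ\twistorbit\circ\Moser_{\bflambda}:\mathbb{P}^{n-1}_{>0}\to\Perm{\bflambda}$.

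Next I would show $F$ is a diffeomorphism onto $\interior{\Perm{\bflambda}}$. The Moser normalization forces $|g_{1j}|^2 = x_j^2/\|x\|^2$; more generally, the $|g_{ij}|$ are expressible via the three-term recurrence for the orthogonal polynomials associated to $-\ii L$, yielding rational expressions in $x$. In particular every column of $g$ is fully supported, ensuring $F(x)$ avoids every proper face of $\Perm{\bflambda}$. A Jacobian calculation --- combining the nondegeneracy content of \cref{tridiagonal_flag} with the standard description of $\mu$ as the moment map of the diagonal torus action on the coadjoint orbit --- shows $dF$ is nondegenerate at every $x$; together with matching dimensions ($n-1$ on both sides) and a proper-map / degree argument, this upgrades $F$ to a global diffeomorphism onto $\interior{\Perm{\bflambda}}$. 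This step is essentially the content of \cite{bloch_flaschka_ratiu90} reformulated via the twist map (cf.\ \cite[Remark 8.8]{bloch_karp23b}).

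Finally I would extend to the boundary. Any $L\in\Jac_{\bflambda}^{\ge 0}\setminus\Jac_{\bflambda}^{>0}$ has some vanishing superdiagonal entries, so it decomposes as a direct sum of smaller Jacobi blocks whose spectra partition $\bflambda$; this matches exactly the face stratification of $\Perm{\bflambda}$, which is indexed by ordered set partitions of $\bflambda$. Continuity of $\twistorbit$ on $\Jac_{\bflambda}^{\ge 0}$ follows from \cref{general_vs_positive_twist}, since $\Jac_{\bflambda}^{\ge 0}\subseteq\Orbit_{\bflambda}^{\ge 0}$ and the twist extends smoothly on $\Fl_n^{\ge 0}$. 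By induction on $n$ applied to each Jacobi block, each boundary stratum of $\Jac_{\bflambda}^{\ge 0}$ maps homeomorphically onto the corresponding face of $\Perm{\bflambda}$, giving a continuous bijection overall. Since $\Jac_{\bflambda}^{\ge 0}$ is compact and $\Perm{\bflambda}$ is Hausdorff, this continuous bijection is automatically a homeomorphism, completing the proof.
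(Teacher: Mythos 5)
This theorem is not proved in the paper at all: it is imported verbatim from Bloch--Flaschka--Ratiu \cite[Theorem p.\ 60]{bloch_flaschka_ratiu90} (cf.\ \cite[Remark 8.8]{bloch_karp23b}), so there is no internal proof to compare against; judging your argument on its own terms, the opening computation is fine ($\mu(\twistorbit(L))_j=\sum_i\lambda_i|g_{ij}|^2$, double stochasticity, Schur--Horn containment), but the central step is not actually proved. You assert that ``a Jacobian calculation shows $dF$ is nondegenerate at every $x$,'' sketch a properness/degree upgrade, and then concede that ``this step is essentially the content of \cite{bloch_flaschka_ratiu90}'' --- but that step \emph{is} the theorem, so as written the core is deferred/circular. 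A self-contained route does exist and is the one implicitly used later in the paper (proof of \cref{twisted_moment_embedding}): the set $\{\twist(\Kry{\bflambda}{x}):x\in\mathbb{P}^{n-1}_{>0}\}$ is a single generic $\H_n^{>0}$-orbit in $\Fl_n(\mathbb{C})$ (all Pl\"ucker coordinates nonvanishing), and the standard toric moment-map/convexity fact gives that $\mu$ maps such an orbit diffeomorphically onto $\interior{\Perm{\bflambda}}$ and its closure homeomorphically onto $\Perm{\bflambda}$. Relatedly, ``every column of $g$ is fully supported'' is not the right criterion for avoiding the boundary: what controls the faces is nonvanishing of the flag minors $\Delta_{[k],I}(g)$, i.e.\ of the Pl\"ucker coordinates of $\twist(V)$, not of individual matrix entries.

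The boundary step also fails as stated. The twist is a global operation on the full flag with the globally ordered spectrum, so the map restricted to a boundary stratum is \emph{not} the product of the blocks' own twisted moment maps, and the naive matching ``block structure $+$ spectral partition $\leftrightarrow$ ordered set partition face'' is not how the map behaves. Concretely, for $n=3$ take $L=\ii\,(\lambda_2)\oplus\ii M'$ with $M'\in$ the $2\times 2$ positive Jacobi cell with spectrum $\{\lambda_1,\lambda_3\}$ and top eigenvector $(v_1,v_2)$: one computes $\mu(\twistorbit(L))=(\lambda_2v_1^2+\lambda_3v_2^2,\ \lambda_1,\ \lambda_2w_1^2+\lambda_3w_2^2)$, which lies in the edge $\{\mu_2=\lambda_1\}$ of the hexagon --- not in the slice ``position $1$ carries $\lambda_2$,'' which is not even a face when $\lambda_2$ is an intermediate eigenvalue; note also that the weights appearing are the $\lambda_a$ attached to the block's \emph{positions}, not the block's own eigenvalues. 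So ``induction on $n$ applied to each Jacobi block'' does not go through without first identifying the actual (inverse-permutation-twisted) stratum-to-face correspondence and showing each stratum map is a reweighted lower-dimensional instance of the same map; none of this is done. The final compactness argument (continuous bijection from a compact space to a Hausdorff space) is fine, but only once bijectivity has genuinely been established.
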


We emphasize that in general, the map
$$
\Jac_{\bflambda}^{\ge 0} \to \Perm{\bflambda}, \quad L \mapsto \mu(L)
$$
(not involving the twist map) is neither injective nor surjective.

\section{Tridiagonalization of the symmetric Toda flow}\label{sec_tridiagonalization}

\subsection{The construction of Deift, Li, Nanda, and Tomei}\label{sec_DLNT_review}
We recall the tridiagonalization of the symmetric Toda flow constructed by Deift, Li, Nanda, and Tomei \cite[Section 7]{deift_li_nanda_tomei86}. While we maintain the notation of \cite{deift_li_nanda_tomei86} where possible, we give an exposition tailored to our perspective. One key notational difference is that we reverse the order of the ground set $[n]$ from its use in \cite{deift_li_nanda_tomei86} (e.g.\ in \cref{defn_hermitian_tridiagonalization}, we use $e_1$ rather than $e_n$); our convention is consistent with the rest of this paper and with \cite{bloch_karp23b,moser75}.

\begin{defn}\label{defn_cyclic_subspace}
Let $A\in\gl_n(\mathbb{C})$ and $x\in\mathbb{C}^n$. We define the {\itshape cyclic subspace}
$$
\cyc{A}{x} := \spn(A^jx : j\in\mathbb{N}) \subseteq \mathbb{C}^n.
$$
Equivalently, $\cyc{A}{x}$ is the minimal $A$-invariant subspace of $\mathbb{C}^n$ containing $x$.
\end{defn}

\begin{defn}\label{defn_hermitian_tridiagonalization}
Let $M$ be an $n\times n$ Hermitian matrix. Let $m := \dim(\cyc{M}{e_1})$, and let $A$ be the $n\times m$ matrix with orthonormal columns obtained by applying the Gram--Schmidt orthonormalization process to the matrix
$$
\begin{bmatrix}
~\\[-22pt]
\vline height 2pt & \vline height 2pt & & \vline height 2pt \\[1pt]
e_1 & Me_1 & \cdots & M^{m-1}e_1 \\[-9pt]
\vline height 2pt & \vline height 2pt & & \vline height 2pt \\
\end{bmatrix}
$$
(so in particular, $\adjoint{A}\hspace*{-1pt}A = \I_m$). We define the $m\times m$ Hermitian matrix $M_T := \adjoint{A}MA$, which represents the endomorphism $M$ restricted to $\cyc{M}{e_1}$. In light of \cref{hermitian_tridiagonalization_properties}\ref{hermitian_tridiagonalization_properties_tridiagonal}, we call $M_T$ the {\itshape tridiagonalization} of $M$.
\end{defn}

\begin{prop}\label{hermitian_tridiagonalization_properties}
Let $M$ be an $n\times n$ Hermitian matrix, and write $M = g\Diag{\bfnu}g^{-1}$, where $\nu_1 \ge \cdots \ge \nu_n$ and $g\in\U_n$.
\begin{enumerate}[label=(\roman*), leftmargin=*, itemsep=2pt]
\item\label{hermitian_tridiagonalization_properties_tridiagonal} The matrix $M_T$ is tridiagonal with positive real entries immediately above and below the diagonal.
\item\label{hermitian_tridiagonalization_properties_eigenvalues} The eigenvalues of $M_T$ are distinct, and its set of eigenvalues is
$$
\{\nu_i : 1 \le i \le n \text{ and } g_{1,i} \neq 0\}.
$$
\item\label{hermitian_tridiagonalization_properties_eigenvectors} Let $\nu_i$ (where $1 \le i \le n$) be a simple eigenvalue of $M$ which is also an eigenvalue of $M_T$. Then the corresponding normalized eigenvectors of $M$ and $M_T$ have the same first entry. That is, if $x\in\mathbb{C}^m$ has norm $1$ such that $M_Tx = \nu_ix$, then $|x_1| = |g_{1,i}|$.
\end{enumerate}

\end{prop}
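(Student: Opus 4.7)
The plan is to work with the orthonormal columns $q_1,\ldots,q_m$ of $A$ produced by Gram--Schmidt, so that $q_1 = e_1$ and each $q_k = \gamma_{k,k} M^{k-1}e_1 + \text{(combination of } q_1,\ldots,q_{k-1}\text{)}$ with $\gamma_{k,k}>0$ (the standard positive-leading-coefficient convention); dually, $M^{k-1}e_1 = \gamma_{k,k}^{-1} q_k + \text{(combination of }q_1,\ldots,q_{k-1}\text{)}$. For (i), I would observe that $Mq_j \in \spn(q_1,\ldots,q_{j+1})$, which forces $(M_T)_{i,j} = q_i^* M q_j = 0$ whenever $i > j+1$; Hermiticity of $M_T$ handles the case $j > i+1$. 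For the subdiagonal, expanding $Mq_k$ in the orthonormal basis yields $(M_T)_{k+1,k} = \gamma_{k,k}/\gamma_{k+1,k+1} > 0$.

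For (ii), distinctness of the eigenvalues of $M_T$ is the classical fact for real tridiagonal matrices with positive off-diagonal entries. To identify the spectrum, I would expand $e_1 = \sum_{i=1}^n \overline{g_{1,i}}\,g_i$ in the orthonormal eigenbasis (writing $g_i$ for the $i$th column of $g$), group summands by distinct eigenvalues, and rewrite this as $e_1 = \sum_{j=1}^r P_j$, where $\mu_1,\ldots,\mu_r$ are the distinct elements of $\{\nu_i : g_{1,i}\neq 0\}$ and each $P_j$ is the (nonzero) orthogonal projection of $e_1$ onto the $\mu_j$-eigenspace. Then $M^k e_1 = \sum_{j=1}^r \mu_j^k P_j$, and a Vandermonde argument (using the distinctness of the $\mu_j$, hence invertibility of the associated Vandermonde matrix) gives $\cyc{M}{e_1} = \spn(P_1,\ldots,P_r)$. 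In particular $m = r$, and since each $P_j$ is a $\mu_j$-eigenvector of $M$ lying in $\cyc{M}{e_1}$, the spectrum of $M_T = M\rvert_{\cyc{M}{e_1}}$ is exactly $\{\mu_1,\ldots,\mu_r\} = \{\nu_i : g_{1,i}\neq 0\}$.

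For (iii), the essential observation is the intertwining relation $MA = AM_T$, which holds because $\cyc{M}{e_1}$ is $M$-invariant and the columns of $A$ span that subspace. Thus if $M_T x = \nu_i x$ with $\|x\|=1$, then $M(Ax) = \nu_i(Ax)$ and $\|Ax\|=1$, so simplicity of $\nu_i$ as an eigenvalue of $M$ forces $Ax = \beta g_i$ for some $\beta\in\mathbb{C}$ with $|\beta|=1$. Since $q_1 = e_1$ and $q_j \perp e_1$ for $j \ge 2$, the first row of $A$ is $(1,0,\ldots,0)$, so $(Ax)_1 = x_1 = \beta g_{1,i}$, giving $|x_1| = |g_{1,i}|$. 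I expect the main obstacle to be the bookkeeping in (ii) when $M$ has repeated eigenvalues: one must be careful that the statement is phrased in terms of the \emph{set} $\{\nu_i : g_{1,i}\neq 0\}$, and that $m = \dim\cyc{M}{e_1}$ matches the number of distinct elements of this set. The remaining parts follow essentially directly from the Lanczos/Krylov structure already built into the definition of $M_T$.
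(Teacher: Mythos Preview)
Your proof is correct and follows essentially the same approach as the paper. The differences are only presentational: for (i) the paper phrases the argument as showing the Krylov matrix $[e_1,\,M_Te_1,\,\ldots,\,(M_T)^{m-1}e_1]$ is upper-triangular with positive diagonal (equivalent to your direct Lanczos computation of $(M_T)_{k+1,k}=\gamma_{k,k}/\gamma_{k+1,k+1}$), and for (iii) the paper starts from the $M$-eigenvector $ge_i$ and produces $x=A^*ge_i$, whereas you start from $x$ and produce $Ax$; both rest on the same intertwining relation $MA=AM_T$. Your invocation of the classical simple-spectrum fact for Jacobi matrices in (ii) is redundant, since your own Vandermonde argument already gives $m=r$ distinct eigenvalues, but it is not wrong.
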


\begin{proof}
\ref{hermitian_tridiagonalization_properties_tridiagonal} Let $m$ and $A$ be as in \cref{defn_hermitian_tridiagonalization}. We show, equivalently, that the $m\times m$ matrix
$$
\begin{bmatrix}
~\\[-22pt]
\vline height 2pt & \vline height 2pt & & \vline height 2pt \\[1pt]
e_1 & M_Te_1 & \cdots & (M_T)^{m-1}e_1 \\[-9pt]
\vline height 2pt & \vline height 2pt & & \vline height 2pt \\
\end{bmatrix}
$$
is upper-triangular with positive diagonal entries. To this end, let $1 \le j \le m$. By the definition of $A$, we can write
$$
M^{j-1}e_1 = \sum_{i=1}^j c_iAe_i, \quad \text{ where } c_1, \dots, c_j\in\mathbb{C} \text{ and } c_j > 0.
$$
Recall that $M_T$ represents the endomorphism $M$ restricted to $\cyc{M}{e_1}$; similarly, $(M_T)^{j-1} = \adjoint{A}M^{j-1}A$ represents $M^{j-1}$ restricted to $\cyc{M}{e_1}$. We obtain
$$
(M_T)^{j-1}e_1 = \adjoint{A}M^{j-1}Ae_1 = \adjoint{A}M^{j-1}e_1 = \sum_{i=1}^jc_i\adjoint{A}\hspace*{-1pt}Ae_i = \sum_{i=1}^jc_ie_i.
$$

\ref{hermitian_tridiagonalization_properties_eigenvalues} The matrix $M_T$ represents the endomorphism $M$ restricted to $\cyc{M}{e_1}$, which in turn (conjugating by $g^{-1}$) is similar to the endomorphism $\Diag{\bfnu}$ restricted to $\cyc{\Diag{\bfnu}}{g^{-1}e_1}$. Given $1 \le i \le n$, consider the vector $y\in\mathbb{C}^n$ defined by
$$
y_j :=  \begin{cases}
(g^{-1}e_1)_j = \overline{g_{1,j}}, & \text{ if $\nu_j = \nu_i$}; \\
0, & \text{ otherwise}
\end{cases} \quad \text{ for } 1 \le j \le n.
$$
Then by Vandermonde's identity, the distinct nonzero $y$'s form a basis for $\cyc{\Diag{\bfnu}}{g^{-1}e_1}$.

\ref{hermitian_tridiagonalization_properties_eigenvectors} Since $\nu_i$ is a simple eigenvalue of both $M$ and $M_T$, the corresponding eigenvector $ge_i$ of $M$ lies in $\cyc{M}{e_1}$. Therefore $ge_i = Ax$ for some $x\in\mathbb{C}^m$, and $\|x\| = \|e_i\| = 1$ because $g$ and $A$ both have orthonormal columns. We have
$$
M_Tx = \adjoint{A}MAx = \adjoint{A}Mge_i = \nu_i\adjoint{A}ge_i = \nu_i\adjoint{A}\hspace*{-1pt}Ax = \nu_ix,
$$
and $|x_1| = |(\adjoint{A}ge_i)_1| = |(ge_i)_1| = |g_{1,i}|$.
\end{proof}

\begin{eg}\label{eg_hermitian_tridiagonalization}
Let
$$
M := \frac{1}{4}\begin{bmatrix}
-2 & 3\sqrt{2} & 3\sqrt{2} \\[2pt]
3\sqrt{2} & -1 & -1 \\[2pt]
3\sqrt{2} & -1 & -1
\end{bmatrix} = g\Diag{1,0,-2}g^{-1}, \quad \text{ where } g := \frac{1}{2}\begin{bmatrix}
\sqrt{2} & 0 & \sqrt{2} \\[1pt]
1 & -\sqrt{2} & -1 \\[1pt]
1 & \sqrt{2} & -1
\end{bmatrix}\in\U_3.
$$
Since $M^2e_1 = 2e_1 - Me_1$, the cyclic subspace $\cyc{M}{e_1}$ is $2$-dimensional. Taking the matrix
$$
\begin{bmatrix}
~\\[-22pt]
\vline height 2pt & \vline height 2pt \\[1pt]
e_1 & Me_1 \\[-9pt]
\vline height 2pt & \vline height 2pt \\
\end{bmatrix} = \begin{bmatrix}
1 & -\frac{1}{2} \\[4pt]
0 & \frac{3}{2\sqrt{2}} \\[6pt]
0 & \frac{3}{2\sqrt{2}}
\end{bmatrix}
$$
and applying the Gram--Schmidt orthonormalization process gives $A := \begin{bmatrix}
1 & 0 \\[2pt]
0 & \frac{1}{\sqrt{2}} \\[6pt]
0 & \frac{1}{\sqrt{2}}
\end{bmatrix}$. Then
$$
M_T = \adjoint{A}MA = \frac{1}{2}\begin{bmatrix}
-1 & 3 \\
3 & -1
\end{bmatrix} = h\Diag{1,-2}h^{-1}, \quad \text{ where } h := \frac{1}{\sqrt{2}}\begin{bmatrix}
1 & -1 \\
1 & 1
\end{bmatrix}\in\U_2.
$$
We can verify that the properties of $M_T$ in \cref{hermitian_tridiagonalization_properties} hold in this case.
\end{eg}

We now consider actions of $\GL_n(\mathbb{C})$ and $\gl_n(\mathbb{C})$ on the exterior power $\bigwedge^{\hspace*{-1pt}k}(\mathbb{C}^n)$.
\begin{defn}\label{defn_wedge_matrix}
For $0 \le k \le n$, let $\bigwedge^{\hspace*{-1pt}k}(\mathbb{C}^n)$ denote the $k$th exterior power of $\mathbb{C}^n$, which we identify with $\mathbb{C}^{\binom{n}{k}}$ by fixing the basis
\begin{align}\label{wedge_basis}
\{e_{i_1} \wedge \cdots \wedge e_{i_k} : 1 \le i_1 < \cdots < i_k \le n\}
\end{align}
ordered lexicographically. Given $g\in\GL_n(\mathbb{C})$, we let $g^{(k)}\in\GL(\bigwedge^{\hspace*{-1pt}k}(\mathbb{C}^n)) \cong \GL_{\binom{n}{k}}(\mathbb{C})$ denote the map induced by $g$, i.e.,
$$
g^{(k)}(x_1 \wedge \cdots \wedge x_k) := gx_1 \wedge \cdots \wedge gx_k \quad \text{ for all } x_1, \dots, x_k\in\mathbb{C}^n.
$$
Given $M\in\gl_n(\mathbb{C})$, we let $M_{(k)}\in\gl(\bigwedge^{\hspace*{-1pt}k}(\mathbb{C}^n)) \cong \gl_{\binom{n}{k}}(\mathbb{C})$ denote the vector field induced by $M$, i.e.,
$$
M_{(k)}(x_1 \wedge \cdots \wedge x_k) := \sum_{i=1}^kx_1 \wedge \cdots \wedge x_{i-1} \wedge Mx_i \wedge x_{i+1} \wedge \cdots \wedge x_k \quad \text{ for all } x_1, \dots, x_k\in\mathbb{C}^n.
$$
Equivalently, $M_{(k)} = \frac{d}{dt}\eval{t=0}\exp(tM)^{(k)}$.
\end{defn}

Deift, Li, Nanda, and Tomei \cite{deift_li_nanda_tomei86} observe that the operations $M\mapsto M_T$ and $M\mapsto M_{(k)}$ are both compatible with the symmetric Toda flow (the proofs in \cite{deift_li_nanda_tomei86} are over $\mathbb{R}$, but easily extend over $\mathbb{C}$). In the statements below, given a Hermitian matrix $N$, we let $N(t)$ denote the symmetric Toda flow \eqref{toda_flaschka_symmetric} beginning at $N$.
\begin{prop}[{Deift, Li, Nanda, and Tomei \cite[Propositions 7.2 and 7.3]{deift_li_nanda_tomei86}}] \label{DLNT_toda_compatible}
Let $M$ be an $n\times n$ Hermitian matrix, let $0 \le k \le n$, and let $t\in\mathbb{R}$. Then the following two diagrams commute:
$$
\begin{tikzcd}[row sep=40pt, column sep=60pt]
M \arrow[mapsto]{r}{\textnormal{Toda}} \arrow[mapsto]{d}{\cdot_T} & M(t) \arrow[mapsto]{d}{\cdot_T} \\
M_T \arrow[mapsto]{r}{\textnormal{Toda}} & M_T(t) = M(t)_T
\end{tikzcd} \quad \text{ and } \quad \begin{tikzcd}[row sep=40pt, column sep=60pt]
M \arrow[mapsto]{r}{\textnormal{Toda}} \arrow[mapsto]{d}{\cdot_{(k)}} & M(t) \arrow[mapsto]{d}{\cdot_{(k)}} \\
M_{(k)} \arrow[mapsto]{r}{\textnormal{Toda}} & M_{(k)}(t) = M(t)_{(k)}
\end{tikzcd}.
$$
\end{prop}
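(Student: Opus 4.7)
The plan is to derive both commutative diagrams from Symes' explicit solution $M(t) = Q(t)^{-1} M Q(t)$, where $Q(t) := \Kterm(\exp(tM))$, together with the uniqueness of the $QR$-factorization.

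I would tackle the second diagram first, as it is the cleaner of the two. Since $M_{(k)}$ is the infinitesimal generator of the one-parameter group $s \mapsto \exp(sM)^{(k)}$, one has $\exp(t M_{(k)}) = \exp(tM)^{(k)}$. If $\exp(tM) = Q(t) R(t)$ with $R(t)$ upper-triangular and having positive diagonal, then $\exp(tM)^{(k)} = Q(t)^{(k)} R(t)^{(k)}$, and this is again a $QR$-factorization: $Q(t)^{(k)}$ is unitary (since $(Q(t)^{-1})^{(k)} = (Q(t)^{(k)})^{-1}$ and similarly for adjoints), while $R(t)^{(k)}$ is upper-triangular in the lex order on $\binom{[n]}{k}$ with positive diagonal entries $(R(t)^{(k)})_{I,I} = \Delta_{I,I}(R(t)) = \prod_{i\in I} R(t)_{ii} > 0$. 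Uniqueness of $QR$ then gives $\Kterm(\exp(t M_{(k)})) = Q(t)^{(k)}$. Applying Symes to $M_{(k)}$ and invoking the natural intertwining $(g^{-1})^{(k)} X_{(k)} g^{(k)} = (g^{-1} X g)_{(k)}$ (a routine check on decomposable wedges) yields $M_{(k)}(t) = (Q(t)^{-1} M Q(t))_{(k)} = M(t)_{(k)}$.

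For the first diagram, I would use the spectral description of Moser's map. Write $M = g \Diag{\bfnu} g^{-1}$ with $g \in \U_n$, and first assume (generically) that the $\nu_i$ are distinct and all $g_{1,i} \ne 0$. From Symes, $g(t) := Q(t)^{-1} g$ diagonalizes $M(t)$. Expanding $Q(t) e_1 = \exp(tM) e_1 / \|\exp(tM) e_1\|$ in the orthonormal eigenbasis $\{g e_i\}$ and computing $g(t)_{1,i} = \langle Q(t) e_1, g e_i \rangle$ yields the key formula
\[
g(t)_{1,i} = \frac{e^{\nu_i t}\, g_{1,i}}{\|\exp(tM) e_1\|},
\]
so that in $\mathbb{P}^{n-1}$ the first entries of the eigenvectors evolve as $(g_{1,i}) \mapsto (e^{\nu_i t} g_{1,i})$, which is Moser's formula for the tridiagonal Toda lattice. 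Now by \cref{hermitian_tridiagonalization_properties}, the distinct eigenvalues of $M_T$ form the subset $\{\nu_{i_1}, \dots, \nu_{i_m}\}$ of the $\nu_i$ for which $g_{1,i} \ne 0$, and the first entries of the normalized eigenvectors of $M_T$ are $|g_{1,i_j}|$. Under Moser's map (\cref{tridiagonal_flag}) with respect to the sub-spectrum $(\nu_{i_j})_j$, the matrix $M_T$ therefore corresponds to $(|g_{1,i_1}| : \cdots : |g_{1,i_m}|) \in \mathbb{P}^{m-1}_{>0}$. The tridiagonal Toda flow evolves this to $(e^{\nu_{i_j} t} |g_{1,i_j}|)_j$ by Moser, while $M(t)_T$ corresponds, by the same characterization applied to $M(t)$, to $(|g(t)_{1,i_j}|)_j = (e^{\nu_{i_j} t} |g_{1,i_j}|)_j$ after projective rescaling. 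These points agree, so $(M_T)(t) = M(t)_T$. The non-generic case is handled by essentially the same argument, replacing individual weights $|g_{1,i}|^2$ by the spectral-measure weights $\sum_{\nu_i = \tilde\nu} |g_{1,i}|^2$ attached to each distinct eigenvalue $\tilde\nu$, each of which still rescales by $e^{2\tilde\nu t}$.

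The main obstacle is the first diagram: one must carefully track the spectral invariants of $M_T$ inherited from $M$ via \cref{hermitian_tridiagonalization_properties}, so that Moser's simple exponential evolution on the tridiagonal side can be matched with the Symes evolution on the full side. The second diagram, by contrast, is essentially an algebraic observation about how $QR$-factorization interacts with exterior powers, and is largely formal.
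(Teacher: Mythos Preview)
The paper does not supply its own proof of this proposition; it is quoted from \cite[Propositions~7.2 and~7.3]{deift_li_nanda_tomei86}, with the parenthetical remark that the real arguments there extend over~$\mathbb{C}$. So there is no in-paper proof to compare against directly.

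That said, your argument is correct, and for the first diagram it is precisely the mechanism the paper develops independently in Section~6.2. Your key computation $g(t)_{1,i} \propto e^{\nu_i t} g_{1,i}$, combined with \cref{hermitian_tridiagonalization_properties}, is the content of \cref{tridiagonalization_from_moser}; your appeal to Moser's exponential evolution on the tridiagonal side is \cref{toda_in_moser}. The paper assembles exactly these two ingredients (plus \cref{full_symmetric_toda_gradient} and \cref{diagonal_pluckers} in place of the direct Symes computation) to prove \cref{tridiagonalization_via_plucker}, which yields $(M_{(k)})_T(t) = (M(t)_{(k)})_T$ without invoking \cref{DLNT_toda_compatible} at all. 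In effect you have reconstructed the paper's alternative route to the result, split into its two constituent diagrams rather than composed. Your handling of the non-generic case via spectral-measure weights is a bit telegraphic but sound; one may also note that $M_T$ always has distinct eigenvalues by \cref{hermitian_tridiagonalization_properties}\ref{hermitian_tridiagonalization_properties_eigenvalues}, so Moser's bijection applies to it unconditionally, and the weights $\|P_{\tilde\nu}e_1\|^2$ rescale by $e^{2\tilde\nu t}$ under Symes' conjugation.

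For the second diagram, your observation that $\exp(tM)^{(k)} = Q(t)^{(k)}R(t)^{(k)}$ is again a $QR$-factorization in the lexicographic basis of $\bigwedge^k\mathbb{C}^n$, together with the functoriality $(g^{-1})^{(k)}X_{(k)}g^{(k)} = (g^{-1}Xg)_{(k)}$, is the clean algebraic argument one would expect; this is essentially the original argument in \cite{deift_li_nanda_tomei86}.
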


Deift, Li, Nanda, and Tomei \cite{deift_li_nanda_tomei86} then apply \cref{DLNT_toda_compatible} to embed the symmetric Toda flow into a product of tridiagonal Toda flows:
\begin{thm}[{Deift, Li, Nanda, and Tomei \cite[Theorem p.\ 230]{deift_li_nanda_tomei86}}] \label{DLNT_tridiagonalization}
Let $M$ be an $n\times n$ Hermitian matrix such that for all $1 \le k \le n-1$, the $\binom{n}{k}$ sums of $k$ distinct eigenvalues of $M$ are pairwise distinct. Then for all $t\in\mathbb{R}$, the following diagram commutes:
\begin{equation}\label{DLNT_tridiagonalization_commutative_diagram}
\begin{tikzcd}[row sep=40pt, column sep=80pt]
M \arrow[mapsto]{r}{\textnormal{Toda}} \arrow[mapsto]{d}{\big((\cdot_{(k)})_T\big)_{1 \le k \le n-1}} & M(t) \arrow[mapsto]{d}{\big((\cdot_{(k)})_T\big)_{1 \le k \le n-1}} \\
\big((M_{(k)})_T\big)_{1 \le k \le n-1} \arrow[mapsto]{r}{\textnormal{Toda}} & \big((M_{(k)})_T(t)\big)_{1 \le k \le n-1} = \big((M(t)_{(k)})_T\big)_{1 \le k \le n-1}
\end{tikzcd}.
\end{equation}
Moreover, the map
\begin{align}\label{DLNT_tridiagonalization_embedding}
t \mapsto \big((M_{(k)})_T(t)\big)_{1 \le k \le n-1} = \big((M(t)_{(k)})_T\big)_{1 \le k \le n-1} \quad \text{ for } t\in\mathbb{R}
\end{align}
is injective. That is, a generic trajectory of the symmetric Toda flow embeds into a product of trajectories of tridiagonal Toda flows.
\end{thm}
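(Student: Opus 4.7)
The plan is to handle commutativity and injectivity separately. For commutativity of \eqref{DLNT_tridiagonalization_commutative_diagram}, I would directly compose the two squares of \cref{DLNT_toda_compatible}: applied to $M$, the second square yields $M(t)_{(k)} = M_{(k)}(t)$, and then the first square, applied to the Hermitian matrix $M_{(k)}$, yields $(M_{(k)})_T(t) = (M_{(k)}(t))_T = (M(t)_{(k)})_T$, which is exactly the bottom right corner of the diagram.

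For injectivity, the plan is to identify each factor $(M(t)_{(k)})_T$ with a point in a projective space via the Moser map, and then read off its evolution in closed form. Abbreviating $\mu_I := \sum_{i \in I} \lambda_i$ for $I \in \binom{[n]}{k}$, by hypothesis these are pairwise distinct. Part (ii) of \cref{hermitian_tridiagonalization_properties}, applied to $M_{(k)}$ (whose eigenvectors are the wedges $v^{(i_1)} \wedge \cdots \wedge v^{(i_k)}$ of eigenvectors of $M$), tells me that the eigenvalues of $(M_{(k)})_T$ are exactly those $\mu_I$ with $\Delta_{[k]}(V_I) \neq 0$, where $V := [v^{(1)}\,|\,\cdots\,|\,v^{(n)}]$ is the unitary eigenvector matrix of $M$ and $V_I$ denotes its restriction to columns indexed by $I$. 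These eigenvalues are all simple, so $(M_{(k)})_T \in \Jac^{>0}_{\bflambda^{(k)}}$ for $\bflambda^{(k)}$ the decreasing tuple of the $\mu_I$'s that arise, and by \cref{tridiagonal_flag} I can identify $(M_{(k)})_T(t)$ with a point $x^{(k)}(t) \in \mathbb{P}^{m_k - 1}_{>0}$ (where $m_k$ is the number of nonzero $\Delta_{[k]}(V_I)$), whose coordinates by part (iii) of \cref{hermitian_tridiagonalization_properties} equal $|\Delta_{[k]}(V_I)|$ up to a common normalization.

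By the commutativity just established, $t \mapsto (M(t)_{(k)})_T$ corresponds under $\Moser_{\bflambda^{(k)}}^{-1}$ to the evolution of $x^{(k)}(t)$ under the tridiagonal Toda flow on $\Jac^{>0}_{\bflambda^{(k)}}$, which by Moser's equations \eqref{moser_gradient_intro} takes the closed form $x^{(k)}_I(t) = e^{\mu_I t}\, x^{(k)}_I(0)$ in projective coordinates. For a generic $M$ some level (already $k=1$) has at least two nonzero coordinates $x^{(k)}_I(0)$; injectivity of $t \mapsto x^{(k)}(t)$, and hence of the full embedding \eqref{DLNT_tridiagonalization_embedding}, then follows because if $x^{(k)}(t_1) = x^{(k)}(t_2)$ in $\mathbb{P}^{m_k-1}_{>0}$, taking the ratio of two nonzero coordinates $I, I'$ yields $e^{(\mu_I - \mu_{I'})(t_1 - t_2)} = 1$, and $\mu_I \neq \mu_{I'}$ forces $t_1 = t_2$.

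The main obstacle is the bookkeeping in the middle step: identifying the first entries of the normalized eigenvectors of $(M_{(k)})_T$ with the left-justified minors $|\Delta_{[k]}(V_I)|$. This is precisely what part (iii) of \cref{hermitian_tridiagonalization_properties} provides, but one must carefully track the lexicographically first basis element $e_1 \wedge \cdots \wedge e_k$ of \eqref{wedge_basis} and observe that its pairing with the wedge eigenvector $v^{(i_1)} \wedge \cdots \wedge v^{(i_k)}$ of $M_{(k)}$ is precisely $\Delta_{[k]}(V_I)$. Once this identification is in place, the closed-form exponential evolution in projective coordinates makes injectivity essentially immediate.
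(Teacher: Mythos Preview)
Your proposal is correct, but it takes a different route from the paper for the commutativity part. You prove commutativity by directly composing the two squares of \cref{DLNT_toda_compatible}, which is essentially the original argument of Deift--Li--Nanda--Tomei. The paper, by contrast, does not invoke \cref{DLNT_toda_compatible} at all: it establishes commutativity indirectly via \cref{tridiagonalization_via_plucker}, by showing that both $(M_{(k)})_T(t)$ and $(M(t)_{(k)})_T$ equal the common closed form $-\ii\Moserextend_{\bflambda^{(k)}}\big(e^{\lambda^{(k)}_It}\Delta_I(\twist(V))\big)$. The first equality uses \cref{wedge_matrix_from_moser} together with \cref{toda_in_moser}; the second uses \cref{wedge_matrix_from_moser} together with the twisted-gradient description \eqref{full_symmetric_toda_gradient_formula} and \cref{diagonal_pluckers}. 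For injectivity the two approaches essentially coincide: both reduce, via the Moser bijection of \cref{tridiagonal_flag}, to the injectivity of $t\mapsto(e^{\mu_It}c_I)_I$ in projective coordinates. Your identification of the first entries of the wedge eigenvectors with $\Delta_{[k],I}(g)$ is exactly the content of the proof of \cref{wedge_matrix_from_moser}, so you are re-deriving that step rather than bypassing it.

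What each approach buys: yours is shorter and needs neither the twist map nor the gradient-flow description of \cref{full_symmetric_toda_gradient}. The paper's route, on the other hand, produces the explicit formula \eqref{tridiagonalization_via_plucker_formula} in terms of Pl\"{u}cker coordinates of $\twist(V)$, which is the point of the paper---it exhibits the tridiagonalization as arising naturally from the Pl\"{u}cker embedding and connects it to total positivity and the moment-polytope picture of \cref{moment_sec}.
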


\subsection{Tridiagonalization via the twist map, Pl\"{u}cker embedding, and Moser map}\label{sec_DLNT_review}
We now rephrase \cref{DLNT_tridiagonalization} (and give a new proof) using three maps we introduced earlier: the twist map $\twist$, the Pl\"{u}cker embedding $\Delta$, and the Moser map $\Moser_{\bflambda}$. For convenience, we extend the domain of $\Moser_{\bflambda}$ from $\mathbb{P}^{n-1}_{>0}$ to $\mathbb{P}^{n-1}(\mathbb{C})$:
\begin{defn}\label{extended_moser}
Recall the Moser map $\Moser_{\bflambda} : \mathbb{P}^{n-1}_{>0} \to \Orbit_{\bflambda}$ defined in \cref{defn_krylov}. We define the {\itshape extended Moser map}
$$
\Moserextend_{\bflambda} : \mathbb{P}^{n-1}(\mathbb{C}) \to \bigcup_{\bfnu}\Orbit_{\bfnu}
$$
as follows, where the union above is over all nonempty subsequences $\bfnu$ of $\bflambda$. Given $x\in\mathbb{P}^{n-1}(\mathbb{C})$, we set $I := \{i \in [n] : x_i\neq 0\}$, and define
$$
y := (|x_i| : i\in I)\in\mathbb{P}^{|I|-1}_{>0} \quad \text{ and } \quad \bfnu := (\lambda_i : i\in I).
$$
Then we define $\Moserextend_{\bflambda}(x) := \Moser_{\bfnu}(y)$.
\end{defn}

\begin{eg}\label{eg_extended_moser}
We have $\Moserextend_{(\lambda_1, \lambda_2, \lambda_3)}(-1 : 0 : 3i + 4) = \Moser_{(\lambda_1, \lambda_3)}(1 : 5)$.
\end{eg}

We can write the tridiagonal Toda lattice flow in terms of the Moser map, as follows.
\begin{prop}[{Moser \cite[(1.4)]{moser75}}]\label{toda_in_moser}
Let $L\in\Jac_{\bflambda}^{>0}$, and let $L(t)$ denote the symmetric Toda flow \eqref{toda_flaschka_skew} beginning at $L$. Take $x\in\mathbb{P}^{n-1}_{>0}$ as in \cref{tridiagonal_flag} such that $\Moser_{\bflambda}(x) = L$. Then
$$
L(t) = \Moser_{\bflambda}(e^{\lambda_1t}x_1 : \cdots : e^{\lambda_nt}x_n) \quad \text{ for all } t\in\mathbb{R}.
$$
\end{prop}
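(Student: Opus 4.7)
The statement is the classical result of Moser that, in the spectral coordinates $x$, the tridiagonal Toda flow becomes the linear flow $\dot{x} = \Diag{\bflambda}x$, whose solution in $\mathbb{P}^{n-1}$ is $(e^{\lambda_1 t}x_1 : \cdots : e^{\lambda_n t}x_n)$. My plan is to prove this directly via Symes' factorization formula, which was stated in the introduction, by tracking the evolution of the first row of the eigenvector matrix.

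Since the tridiagonal Toda flow preserves $\Jac_{\bflambda}^{>0}$, \cref{tridiagonal_flag} gives $L(t) = \Moser_{\bflambda}(x(t))$ for a unique $x(t) \in \mathbb{P}^{n-1}_{>0}$, so it suffices to identify $x(t)$. Set $M := -\ii L$, and let $g \in \U_n$ be the (real orthogonal) matrix of eigenvectors of $M$ with all first-row entries $g_{1,i}$ positive; these entries represent $x$ under the Moser correspondence (the equivalence with the Vandermonde-flag definition of $\Moser_{\bflambda}$ in \cref{defn_krylov} is \cite[Corollary 4.24]{bloch_karp23b}). Symes' formula then gives $M(t) = Q(t)^{-1}MQ(t)$ with $Q(t) := \Kterm(\exp(tM))$, so the eigenvector matrix at time $t$ is $g(t) = Q(t)^{-1}g$, and the task reduces to computing its first row.

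By the defining property of the $QR$-factorization (with positive diagonal on the $R$ factor), the first column of $Q(t)$ is $\exp(tM)e_1$ divided by its norm. Expanding $e_1 = \sum_{i=1}^n x_i v^{(i)}$ in the orthonormal eigenbasis $\{v^{(i)}\}$ of $M$ (using $\langle e_1, v^{(i)} \rangle = g_{1,i} = x_i$), I obtain $\exp(tM)e_1 = \sum_i e^{\lambda_i t} x_i v^{(i)}$, whence
$$
g(t)_{1,j} = \bigl\langle Q(t)e_1,\, v^{(j)} \bigr\rangle = \frac{e^{\lambda_j t} x_j}{\|\exp(tM)e_1\|}.
$$
The denominator $\|\exp(tM)e_1\| = \bigl(\sum_i e^{2\lambda_i t} x_i^2\bigr)^{1/2}$ is positive and independent of $j$, so in $\mathbb{P}^{n-1}_{>0}$ we read off $x(t) = (e^{\lambda_1 t}x_1 : \cdots : e^{\lambda_n t}x_n)$, as desired. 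The main obstacle is only bookkeeping: namely, invoking the identification of the first-row eigenvector entries with the Vandermonde-flag $x$, and verifying that the positivity conventions (first entries positive, $R$-factor with positive diagonal) are consistently preserved along the flow.
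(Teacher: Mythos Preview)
Your proof is correct. The computation via Symes' factorization formula is clean, and the only delicate point---identifying the first row of the eigenvector matrix with the Moser coordinate $x$ in the Vandermonde-flag sense of \cref{defn_krylov}---you handle by invoking the equivalence built into \cref{tridiagonal_flag}. The positivity bookkeeping (real $M$ gives real $Q(t)$, and $g(t)_{1,j} = e^{\lambda_j t}x_j/\|\exp(tM)e_1\| > 0$) is fine.

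The paper's own proof is much shorter and takes a different route: it simply observes that the statement is a restatement of Moser's equation \eqref{moser_gradient_intro}, or alternatively follows from \cref{full_symmetric_toda_gradient} combined with \eqref{gradient_flow_kahler_equation}. In other words, the paper either defers to Moser's original computation of the spectral ODE, or applies the twisted-gradient-flow machinery already developed in \cref{sec_toda_gradient}. Your argument is more self-contained and elementary: it relies only on Symes' explicit solution (stated in the introduction) and a direct linear-algebra computation of the first column of $Q(t)$, without passing through the K\"{a}hler gradient description or the twist map. The paper's second alternative buys uniformity with the full symmetric case and avoids re-deriving the eigenvector evolution by hand; your approach buys transparency and independence from the heavier machinery of \cref{sec_toda_gradient}.
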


\begin{proof}
This is a restatement of \eqref{moser_gradient_intro}. Alternatively, we can apply \cref{full_symmetric_toda_gradient} and \eqref{gradient_flow_kahler_equation}.
\end{proof}

We recall that we are working with both Hermitian and skew-Hermitian matrices; if $M$ is an $n\times n$ Hermitian matrix with distinct eigenvalues $\bflambda$, then we have a corresponding skew-Hermitian matrix $L = \ii M\in\Orbit_{\bflambda}$.
\begin{prop}\label{tridiagonalization_from_moser}
Let $M$ be an $n\times n$ Hermitian matrix with distinct eigenvalues $\bflambda$, and write $M = g\Diag{\bflambda}g^{-1}$, where $g\in\U_n$. Then
$$
\Moserextend_{\bflambda}(g_{1,1} : \cdots : g_{1,n}) = \ii M_T.
$$
\end{prop}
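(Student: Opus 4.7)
The plan is to use Proposition~\ref{hermitian_tridiagonalization_properties} to match both sides explicitly via the first entries of normalized eigenvectors, which is how Moser's map is characterized in the introduction.

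First I would unpack the right-hand side via Definition~\ref{extended_moser}: set $I := \{i \in [n] : g_{1,i} \neq 0\}$, $\bfnu := (\lambda_i : i \in I)$, and $y := (|g_{1,i}| : i \in I) \in \mathbb{P}^{|I|-1}_{>0}$, so that
\[
\Moserextend_{\bflambda}(g_{1,1} : \cdots : g_{1,n}) = \Moser_{\bfnu}(y).
\]

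Next I would verify that $\ii M_T \in \Jac_{\bfnu}^{>0}$ using Proposition~\ref{hermitian_tridiagonalization_properties}(i) and (ii): part~(i), together with Hermiticity, forces $M_T$ to be real symmetric tridiagonal with positive entries immediately above and below the diagonal, and part~(ii) identifies its spectrum (written in strictly decreasing order) as $\bfnu$; in particular $m = |I|$. By Theorem~\ref{tridiagonal_flag} there is then a unique $y' \in \mathbb{P}^{m-1}_{>0}$ with $\Moser_{\bfnu}(y') = \ii M_T$, and the components of $y'$ are, by the description of Moser's map recalled via \eqref{moser_gradient_intro}--\eqref{moser_map_intro}, exactly the positive first entries of the normalized eigenvectors of $M_T$. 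Proposition~\ref{hermitian_tridiagonalization_properties}(iii) then says that each of these first entries equals $|g_{1,i}|$ for the corresponding $i \in I$, so $y' = y$ and hence $\Moser_{\bfnu}(y) = \ii M_T$, as required.

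The only step that needs real care is the invocation of the ``inverse'' description of $\Moser_{\bfnu}$ through the first entries of normalized eigenvectors, as opposed to the Vandermonde-flag definition in Definition~\ref{defn_krylov}; these two descriptions are equivalent by Moser's original analysis and are both subsumed in Theorem~\ref{tridiagonal_flag}, so I expect no serious obstacle. Everything else is simply bookkeeping on top of Proposition~\ref{hermitian_tridiagonalization_properties}, which was set up with exactly this application in mind.
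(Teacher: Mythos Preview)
Your proposal is correct and follows exactly the approach the paper takes: the paper's proof is the single sentence ``This follows from \cref{hermitian_tridiagonalization_properties} and \cref{tridiagonal_flag},'' and you have simply unpacked that citation in the natural way. Your caveat about the eigenvector description of $\Moser_{\bfnu}^{-1}$ versus the Vandermonde-flag definition is well placed, but as you note this equivalence is precisely the content of Moser's result recorded in \cref{tridiagonal_flag}.
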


\begin{proof}
This follows from \cref{hermitian_tridiagonalization_properties} and \cref{tridiagonal_flag}.
\end{proof}

We now generalize \cref{tridiagonalization_from_moser} to give an interpretation of $(M_{(k)})_T$. For $0 \le k \le n$, we define
$$
\bflambda^{(k)} := \bigg(\sum_{i\in I}\lambda_i : I\in\textstyle\binom{[n]}{k}\bigg)\in\mathbb{R}^{\binom{n}{k}}.
$$
Moreover, we reorder the elements of $\binom{[n]}{k}$ (from the lexicographic order) so that the entries of $\bflambda^{(k)}$ are weakly decreasing. Importantly, in both orders, the first element of $\binom{[n]}{k}$ is $[k]$.
\begin{cor}\label{wedge_matrix_from_moser}
Let $M$ be an $n\times n$ Hermitian matrix with distinct eigenvalues $\bflambda$, and let $V\in\Fl_n(\mathbb{C})$ denote the complete flag corresponding to $\ii M$ under the isomorphism \eqref{orbit_complete_isomorphism_intro}. Let $0 \le k \le n$ be such that the entries of $\bflambda^{(k)}$ are distinct. Then
$$
\Moserextend_{\bflambda^{(k)}}\big(\Delta_I(\twist(V)) : I\in\textstyle\binom{[n]}{k}\big) = \ii(M_{(k)})_T.
$$
\end{cor}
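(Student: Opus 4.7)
The plan is to reduce the statement directly to \cref{tridiagonalization_from_moser} applied to the Hermitian matrix $M_{(k)}$, exploiting the fact that the extended Moser map depends only on the moduli (and supports) of its input coordinates.

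First, I would fix a canonical representative $g\in\U_n$ of $V$ as in \eqref{canonical_schubert_representative}, so that $M = g\Diag{\bflambda}g^{-1}$ and $\twist(V)$ is represented by $\iota(g) = \delta_ng^{-1}\delta_n$. Since $g$ is unitary, so is $g^{(k)}\in\U_{\binom{n}{k}}$, and after reordering the wedge basis \eqref{wedge_basis} so that $\bflambda^{(k)}$ has strictly decreasing entries (possible by the hypothesis on $\bflambda^{(k)}$), we obtain $M_{(k)} = g^{(k)}\Diag{\bflambda^{(k)}}(g^{(k)})^{-1}$. Crucially, $e_{[k]} = e_1\wedge\cdots\wedge e_k$ remains the first basis vector in the reordered basis, because $\lambda_1+\cdots+\lambda_k$ is the maximum entry of $\bflambda^{(k)}$.

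Next, I would apply \cref{tridiagonalization_from_moser} to $M_{(k)}$. The ``first row'' of the diagonalizing unitary is
$$
\big(\langle e_{[k]},\, g^{(k)}e_I\rangle : I\in\textstyle\binom{[n]}{k}\big) = \big(\Delta_{[k], I}(g) : I\in\binom{[n]}{k}\big),
$$
yielding $\Moserextend_{\bflambda^{(k)}}\big(\Delta_{[k], I}(g) : I\in\binom{[n]}{k}\big) = \ii(M_{(k)})_T$.

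Finally, I need to verify that this agrees with $\Moserextend_{\bflambda^{(k)}}\big(\Delta_I(\twist(V)) : I\big)$. By \cref{defn_general_twist}, $\Delta_I(\twist(V)) = \Delta_I(\iota(g))$; factoring out the $\pm 1$ entries of $\delta_n$ from the left-justified minor on rows $I$ gives $\Delta_I(\iota(g)) = (-1)^{\sum_{i\in I}i + k(k+1)/2}\Delta_{I,[k]}(g^{-1})$, and since $g$ is unitary we have $(g^{-1})_{i,j} = \overline{g_{j,i}}$, which implies $\Delta_{I,[k]}(g^{-1}) = \overline{\Delta_{[k], I}(g)}$. Thus $|\Delta_I(\twist(V))| = |\Delta_{[k], I}(g)|$ for every $I$, so the two input tuples have matching moduli and supports, and \cref{extended_moser} (which takes absolute values and discards zero coordinates along with the corresponding eigenvalues) sends them to the same element of $\bigcup_{\bfnu}\Orbit_{\bfnu}$. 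The main obstacle is purely bookkeeping: keeping track of the reordering of $\binom{[n]}{k}$ so that $e_{[k]}$ stays in the first position, and verifying that the signs and complex conjugation introduced by $\iota$ are indeed killed by the absolute-value step built into $\Moserextend$.
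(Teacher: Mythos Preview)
Your proposal is correct and follows essentially the same route as the paper's own proof: take the canonical representative $g\in\U_n$, apply \cref{tridiagonalization_from_moser} to $M_{(k)}$ via its diagonalization by $g^{(k)}$, identify the first-row entries as $\Delta_{[k],I}(g)$, and then check $|\Delta_{[k],I}(g)| = |\Delta_I(\iota(g))|$ so that the extended Moser map gives the same output. You simply spell out in more detail the sign/conjugation bookkeeping and the reason $e_{[k]}$ remains the first basis vector after reordering, which the paper leaves implicit.
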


\begin{proof}
Let $g\in\U_n$ be the canonical representative of $V$ as in \eqref{canonical_schubert_representative}, so that $M = g\Diag{\bflambda}g^{-1}$. Then the eigenvalues of $M$ are $\bflambda^{(k)}_I$ for $I\in\binom{[n]}{k}$, with corresponding eigenvectors $\bigwedge_{i\in I}ge_i$. The first entry of $\bigwedge_{i\in I}ge_i$ is the coefficient of $e_1 \wedge \cdots \wedge e_k$, namely, $\Delta_{[k],I}(g)$. We have
$$
|\Delta_{[k],I}(g)| = |\Delta_I(\iota(g))| = |\Delta_I(\twist(V))|,
$$
so the result follows from \cref{tridiagonalization_from_moser}.
\end{proof}

We now state our main result:
\begin{thm}\label{tridiagonalization_via_plucker}
Let $M$ be a Hermitian matrix with distinct eigenvalues $\bflambda$ such that for all $1 \le k \le n-1$, the entries of $\lambda^{(k)}$ are distinct. Let $V\in\Fl_n(\mathbb{C})$ denote the complete flag corresponding to $\ii M$ under the isomorphism \eqref{orbit_complete_isomorphism_intro}. Then under the Toda flow, we have
\begin{align}\label{tridiagonalization_via_plucker_formula}
(M_{(k)})_T(t) = (M(t)_{(k)})_T = -\ii\Moserextend_{\bflambda^{(k)}}\big(e^{\lambda^{(k)}_It}\Delta_I(\twist(V)) : I\in\textstyle\binom{[n]}{k}\big)
\end{align}
for all $t\in\mathbb{R}$ and $1 \le k \le n-1$, which we see explicitly is a twisted gradient flow. In particular, the diagram \eqref{DLNT_tridiagonalization_commutative_diagram} commutes, and we can write the embedding \eqref{DLNT_tridiagonalization_embedding} as
\begin{align}\label{tridiagonalization_via_plucker_embedding}
t \mapsto \Big(-\ii\Moserextend_{\bflambda^{(k)}}\big(e^{\lambda^{(k)}_It}\Delta_I(\twist(V)) : I\in\textstyle\binom{[n]}{k}\big)\Big)_{1 \le k \le n-1} \quad \text{ for } t\in\mathbb{R}.
\end{align}
This embeds a generic trajectory of the symmetric Toda flow into a product of trajectories of tridiagonal Toda flows.
\end{thm}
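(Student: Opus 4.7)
The plan is to derive the main formula \eqref{tridiagonalization_via_plucker_formula} by combining three ingredients already established in the paper: the description of the symmetric Toda flow as a twisted Kähler gradient flow on $\Orbit_{\bflambda}$ (\cref{full_symmetric_toda_gradient}), the explicit action of such a gradient flow on Plücker coordinates (\cref{diagonal_pluckers}), and the identification of the tridiagonalized wedge matrix $(M_{(k)})_T$ with the (extended) Moser map applied to Plücker coordinates of $\twist(V)$ (\cref{wedge_matrix_from_moser}). These three facts assemble cleanly into a new, geometric proof of the Deift--Li--Nanda--Tomei tridiagonalization.

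First I would track the twisted flag $\twist(V(t))$ along the Toda flow. Applying $\twist$ to \eqref{full_symmetric_toda_gradient_formula} and using that $\twist$ is an involution (\cref{twist_involution}), we get
\begin{align*}
\twist(V(t)) = \Diag{e^{\lambda_1 t}, \dots, e^{\lambda_n t}}\cdot\twist(V),
\end{align*}
so $\twist(V(t))$ itself evolves under the straight Kähler gradient flow with respect to $N = -\ii\Diag{\bflambda}$. Then \cref{diagonal_pluckers} yields immediately
\begin{align*}
\Delta_I(\twist(V(t))) = e^{\lambda^{(k)}_I t}\Delta_I(\twist(V)) \quad \text{for all } I\in\textstyle\binom{[n]}{k} \text{ and } t\in\mathbb{R}.
\end{align*}
Applying \cref{wedge_matrix_from_moser} to $M(t)$ then converts this Plücker data directly into $\ii(M(t)_{(k)})_T$, which gives the rightmost equality of \eqref{tridiagonalization_via_plucker_formula}. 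The remaining equality $(M_{(k)})_T(t) = (M(t)_{(k)})_T$ follows from \cref{DLNT_toda_compatible}; alternatively, one can verify it directly by observing that the right-hand side of \eqref{tridiagonalization_via_plucker_formula}, viewed as a curve in $\Jac_{\bfnu}^{>0}$ for the subsequence $\bfnu$ of $\bflambda^{(k)}$ indexed by the nonzero Plücker coordinates, is a tridiagonal Toda trajectory via \cref{toda_in_moser}. The commutativity of \eqref{DLNT_tridiagonalization_commutative_diagram} then follows formally, and the fact that each component is explicitly a twisted gradient flow is precisely the content of the pure-exponential scaling $\Delta_I \mapsto e^{\lambda^{(k)}_I t}\Delta_I$.

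For the injectivity of \eqref{tridiagonalization_via_plucker_embedding}, I would look at the $k=1$ component alone: since $M_{(1)} = M$, and since the (extended) Moser map is a diffeomorphism onto its image (\cref{tridiagonal_flag}), distinct values of $t$ produce distinct tuples $\big(e^{\lambda_i t}|\Delta_i(\twist(V))|\big)$ modulo rescaling whenever at least two of the $\Delta_i(\twist(V))$ are nonzero. This holds for any $M$ which has at least two distinct eigenvalues and whose first standard basis vector $e_1$ is not itself an eigenvector, which is a generic condition.

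The main obstacle is not a single computation but the careful interplay between the piecewise-defined twist map $\twist$ and the extended Moser map $\Moserextend_{\bflambda^{(k)}}$ on strata where some Plücker coordinates vanish. The absolute values appearing in \cref{extended_moser} are exactly what absorb the Schubert-cell-dependent phases introduced by $\twist$, justifying \cref{schubert_vs_opposite}; and one must verify that the support $\{I : \Delta_I(\twist(V))\neq 0\}$ is invariant along the flow (which it is, since the scaling factors $e^{\lambda^{(k)}_I t}$ are nowhere zero), so that $\Moserextend_{\bflambda^{(k)}}$ lands in one fixed smaller orbit throughout the trajectory.
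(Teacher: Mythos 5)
Your proposal is correct and follows essentially the same route as the paper: the equality $(M(t)_{(k)})_T = -\ii\Moserextend_{\bflambda^{(k)}}\big(e^{\lambda^{(k)}_It}\Delta_I(\twist(V))\big)$ via \eqref{full_symmetric_toda_gradient_formula}, \cref{twist_involution}, \cref{diagonal_pluckers}, and \cref{wedge_matrix_from_moser}; the other equality via the ``direct alternative'' you mention (\cref{wedge_matrix_from_moser} plus \cref{toda_in_moser}, which is exactly what the paper does, making the appeal to \cref{DLNT_toda_compatible} unnecessary); and injectivity from \cref{tridiagonal_flag}, where your genericity analysis ($e_1$ not an eigenvector, at least two nonzero $\Delta_i(\twist(V))$) just spells out what the paper leaves implicit.
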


\begin{proof}
The fact that $(M_{(k)})_T(t)$ equals the right-hand side of \eqref{tridiagonalization_via_plucker_formula} follows from \cref{wedge_matrix_from_moser} and \cref{toda_in_moser}. The fact that $(M(t)_{(k)})_T$ equals the right-hand side of \eqref{tridiagonalization_via_plucker_formula} follows similarly, where instead of \cref{toda_in_moser}, we use \eqref{full_symmetric_toda_gradient_formula} and \cref{diagonal_pluckers}. The fact that \eqref{tridiagonalization_via_plucker_embedding} is an embedding follows from \cref{tridiagonal_flag}.
\end{proof}

\begin{rmk}\label{embedding_remark}
Suppose $\bflambda$ is such that for all $1 \le k \le n-1$, the entries of $\lambda^{(k)}$ are distinct. While \eqref{DLNT_tridiagonalization_embedding} (equivalently, \eqref{tridiagonalization_via_plucker_embedding}) is an embedding, the map
\begin{align}\label{embedding_remark_formula}
M \mapsto \big((M_{(k)})_T\big)_{1 \le k \le n-1} = \Big(-\ii\Moserextend_{\bflambda^{(k)}}\big(\Delta_I(\twist(V)) : I\in\textstyle\binom{[n]}{k}\big)\Big)_{1 \le k \le n-1}
\end{align}
(where $V\in\Fl_n(\mathbb{C})$ denotes the complete flag corresponding to $\ii M$ under the isomorphism \eqref{orbit_complete_isomorphism_intro}) is not injective on the set of $n\times n$ Hermitian matrices $M$ with eigenvalues $\bflambda$. Indeed, we can see from the expression on the right-hand side above that for each $1 \le k \le n-1$, we only recover the vector $\big(|\Delta_I(\twist(V))| : I\in\textstyle\binom{[n]}{k}\big)$ of absolute values of the Pl\"{u}cker coordinates of $\twist(V)_k$, and not necessarily the complex argument of each Pl\"{u}cker coordinate. If in addition we record the complex argument of each $\Delta_I(\twist(V))$ (which, up to a change in notation, is the element $\tau_I(M)$ of \cite[Section 7]{deift_li_nanda_tomei86}), then we obtain an injection. For example, the map \eqref{embedding_remark_formula} gives an embedding when restricted to $\ii M\in\Orbit_{\bflambda}^{>0}$, since in this case every Pl\"{u}cker coordinate $\Delta_I(\twist(V))$ is positive.
\end{rmk}

\begin{rmk}\label{integrability_remark}
Deift, Li, Nanda, and Tomei \cite[Theorem 3.3]{deift_li_nanda_tomei86} also show that the symmetric Toda flow is Liouville integrable, which depends on finding sufficient independent integrals in involution with respect to the appropriate Poisson bracket. As they discuss, the explicit embedding constructed in this section gives an alternative demonstration of integrability, which is independent of Liouville integrability. It would be interesting to further study the connection between these two approaches.
\end{rmk}

\section{Toda flows on moment polytopes}\label{sec_moment_polytopes}\label{moment_sec}

\noindent We recall the discussion of the moment map $\mu$ from \cref{sec_moser}, and in particular \cref{jacobi_manifold}, which implies that the twisted moment map $\mu\circ\twistorbit$ gives a diffeomorphism from $\Jac_{\bflambda}^{>0}$ to the interior of $\Perm{\bflambda}$. In this section, we study variations of this map applied not just to tridiagonal matrices $\Jac_{\bflambda}^{\ge 0}$, but to a general adjoint orbit $\Orbit_{\bflambda}$. One such variation will be based on the embedding of \cref{sec_tridiagonalization}.

We begin by studying the twisted moment map $\mu\circ\twistorbit$ on $\Orbit_{\bflambda}$. The following result is closely related to work of Kodama and Williams \cite[Section 6]{kodama_williams15}.
\begin{prop}\label{twisted_moment_embedding}
The twisted moment map
\begin{align}\label{twisted_moment_embedding_formula}
\Orbit_{\bflambda}\to\Perm{\bflambda}, \quad L \mapsto \mu(\twistorbit(L))
\end{align}
is an embedding when restricted to a generic trajectory of the symmetric Toda flow \eqref{toda_flaschka_skew}.
\end{prop}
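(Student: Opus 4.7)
The plan is to leverage \cref{full_symmetric_toda_gradient}: under the twist map, the symmetric Toda flow becomes the gradient flow on $\Orbit_{\bflambda}$ in the K\"{a}hler metric with respect to $N := -\ii\Diag{\bflambda}$, with Hamiltonian $H(M) := \nu(M,N)$. The key observation is that $H$ is, up to an additive constant, a nonzero linear functional of the moment-map image. Indeed, for any $M\in\Orbit_{\bflambda}$, using the formula $\nu(M,N) = 2n\tr(MN) - 2\tr(M)\tr(N)$ together with $M = \ii M'$ (where $M'$ is Hermitian with $\diag(M') = \mu(M)$), a direct computation yields
$$
H(M) \;=\; 2n\,\langle \mu(M),\, \bflambda\rangle \;+\; C,
$$
where $C$ depends only on $\bflambda$ (since $\tr(M')$ is constant on $\Orbit_{\bflambda}$).

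Next I would use the standard monotonicity property of gradient flows. Let $L(t)$ be any trajectory of \eqref{toda_flaschka_skew} which is not a fixed point of the flow (this is precisely the generic condition, excluding the finitely many critical points of $H$, namely the permutations of $\ii\Diag{\bflambda}$). By \cref{full_symmetric_toda_gradient}, $\twistorbit(L(t))$ is a non-critical gradient trajectory, so
$$
\frac{d}{dt}\,H(\twistorbit(L(t))) \;=\; \|\grad H(\twistorbit(L(t)))\|^2 \;>\; 0
$$
for all $t\in\mathbb{R}$. Combined with the linear relation above, this gives that $\langle \mu(\twistorbit(L(t))),\bflambda\rangle$ is strictly increasing with nonzero derivative in $t$, which in turn forces $t \mapsto \mu(\twistorbit(L(t)))$ to be both injective and an immersion into $\Perm{\bflambda}$.

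Finally, to upgrade the injective immersion to an embedding in the subspace topology, I would use compactness of $\Orbit_{\bflambda}$ together with standard convergence properties of gradient flows of real-analytic functions. The trajectory $\twistorbit(L(t))$ converges as $t \to \pm\infty$ to two critical points $p_\pm$ of $H$, and for a generic initial condition these are distinct permutations of $\ii\Diag{\bflambda}$; since $H$ is strictly monotonic along the trajectory, the limit points $\mu(p_\pm)$ are distinct vertices of $\Perm{\bflambda}$ and lie outside $\mu\circ\twistorbit(\{L(t)\})$. This prevents the image from accumulating on itself and shows that $t\mapsto \mu(\twistorbit(L(t)))$ is a homeomorphism onto its image, i.e.\ an embedding.

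The main obstacle I anticipate is bookkeeping the genericity condition cleanly. In particular, $\twistorbit$ is only piecewise smooth on $\Orbit_{\bflambda}$, yet the \emph{twisted} flow $\twistorbit(L(t))$ is globally smooth (being a gradient trajectory), so one must verify that the composition $\mu\circ\twistorbit$ restricted to the Toda trajectory inherits enough regularity; but since $\mu$ is smooth and we only evaluate it on the smooth gradient trajectory, this issue dissolves. The rest is then a straightforward application of the Lyapunov/gradient-flow machinery.
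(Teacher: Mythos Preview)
Your argument is correct and takes a genuinely different route from the paper. The paper's proof invokes \eqref{full_symmetric_toda_gradient_formula} to write $\twist(V(t)) = \Diag{e^{\lambda_1 t},\dots,e^{\lambda_n t}}\cdot\twist(V_0)$, observes that this is a one-parameter curve inside an $\H_n^{>0}$-orbit on $\Fl_n(\mathbb{C})$, and then quotes the standard result (e.g.\ from toric geometry, citing Fulton) that the moment map sends a generic torus orbit homeomorphically onto the interior of $\Perm{\bflambda}$; the embedding of the trajectory follows by restriction. Your approach instead exploits the identity $H(M) = 2n\langle\mu(M),\bflambda\rangle + C$ and the strict monotonicity of $H$ along nontrivial K\"{a}hler gradient trajectories. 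This is more self-contained: it avoids the external moment-map homeomorphism theorem entirely, and in fact your genericity hypothesis (``not a fixed point'') is nominally weaker than the paper's (``generic torus orbit''). The trade-off is that your final paragraph on upgrading to an embedding is more laborious than necessary. You do not need convergence to critical points or real-analyticity: since $f(t) := \langle\mu(\twistorbit(L(t))),\bflambda\rangle$ is continuous and strictly increasing, its inverse $f^{-1}$ is continuous on the image, and $y \mapsto f^{-1}(\langle y,\bflambda\rangle)$ gives a continuous left inverse to $t\mapsto\mu(\twistorbit(L(t)))$ on its image, which immediately yields the homeomorphism onto the image.
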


\begin{proof}
Given $L\in\Orbit_{\bflambda}$, let $L(t)$ ($t\in\mathbb{R}$) denote the symmetric Toda flow beginning at $L$, and let $V(t)\in\Fl_n(\mathbb{C})$ denote the complete flag corresponding to $L(t)$ under the isomorphism \eqref{orbit_complete_isomorphism_intro}. Then by \eqref{full_symmetric_toda_gradient_formula}, we have
\begin{align}\label{twist_orbit_to_twist_formula}
\twist(V(t)) = \Diag{e^{\lambda_1t}, \dots, e^{\lambda_nt}}\cdot\twist(V_0).
\end{align}
Now let $\H_n(\mathbb{C})$ denote the subset of diagonal matrices in $\GL_n(\mathbb{C})$, and let $\H_n^{>0}$ denote its positive part, consisting of diagonal matrices with positive diagonal entries. Then $\H_n(\mathbb{C})$ acts on $\Fl_n(\mathbb{C})$ by left multiplication. We may regard $\mu$ as a moment map on $\Fl_n(\mathbb{C})$, and therefore it maps a generic $\H_n^{>0}$-orbit homeomorphically onto the interior of the moment polytope $\Perm{\bflambda}$ (see, e.g., \cite[Section 4.2]{fulton93}). Since $\Diag{e^{\lambda_1t}, \dots, e^{\lambda_nt}}\in\H_n^{>0}$ for all $t\in\mathbb{R}$, the result follows.
\end{proof}

\begin{rmk}\label{moment_map_explicit_remark}
We note that one can write down the moment map $\mu$ explicitly in terms of Pl\"{u}cker coordinates, as follows. Let $\hypersimplex_{k,n}$ denote the convex hull in $\mathbb{R}^n$ of all $\binom{n}{k}$ vectors with $k$ $1$'s and $n-k$ $0$'s, called a {\itshape hypersimplex}. Then we have the {\itshape Grassmannian moment map}
$$
\mu_{k,n} : \Gr_{k,n}(\mathbb{C})\to\mathbb{R}^n, \quad V\mapsto \Bigg(\frac{\sum_{I\in\binom{[n]}{k},\hspace*{2pt} I\ni i}|\Delta_I(V)|^2}{\sum_{I\in\binom{[n]}{k}}|\Delta_I(V)|^2}\Bigg)_{1 \le i \le n},
$$
whose image is $\hypersimplex_{k,n}$ (cf.\ \cite[Section 2]{gelfand_goresky_macpherson_serganova87}). One can verify that if $P\in\gl_n(\mathbb{C})$ is orthogonal projection onto $V\in\Gr_{k,n}(\mathbb{C})$, then $\diag(P) = \mu_{k,n}(V)$.

Now let $L\in\Orbit_{\bflambda}$, and let $V = (V_1, \dots, V_{n-1}) \in\Fl_n(\mathbb{C})$ denote the complete flag corresponding to $L$ under the isomorphism \eqref{orbit_complete_isomorphism_intro}. Recall from \eqref{projection_sum_formula} that we may write
$$
-\ii L = \bigg(\sum_{k=1}^{n-1}(\lambda_k - \lambda_{k+1})P_k\bigg) + \lambda_n\I_n,
$$
where $P_k\in\gl_n(\mathbb{C})$ is orthogonal projection onto $V_k$. Then
\begin{gather}\label{moment_map_explicit_formula}
\begin{aligned}
\mu(L) &= \bigg(\sum_{k=1}^{n-1}(\lambda_k - \lambda_{k+1})\diag(P_k)\bigg) + \lambda_n\diag(I_n) \\
&= \bigg(\sum_{k=1}^{n-1}(\lambda_k - \lambda_{k+1})\mu_{k,n}(V_k)\bigg) + \lambda_n(1, \dots, 1).
\end{aligned}
\end{gather}

We note that \eqref{moment_map_explicit_formula} provides a convenient way to calculate \eqref{twisted_moment_embedding_formula} for a trajectory $L(t)$ of the Toda flow, using the formula \eqref{twist_orbit_to_twist_formula}. Namely, we have
$$
\mu(\twistorbit(L(t))) = \bigg(\sum_{k=1}^{n-1}(\lambda_k - \lambda_{k+1})\mu_{k,n}\big(\Diag{e^{\lambda_1t}, \dots, e^{\lambda_nt}}\cdot\twist(V_k)\big)\bigg) + \lambda_n(1, \dots, 1).
$$
\end{rmk}

We now apply the moment map $\mu$ to the image of the embedding used in \cref{tridiagonalization_via_plucker}.
\begin{prop}\label{moment_embedding_permutohedron_product}
Suppose $\bflambda\in\mathbb{R}^n$ is such that for all $1 \le k \le n-1$, the entries of $\lambda^{(k)}$ are distinct. Given $L\in\Orbit_{\bflambda}$, let $V\in\Fl_n(\mathbb{C})$ denote the complete flag corresponding to $L$ under the isomorphism \eqref{orbit_complete_isomorphism_intro}. Let $\Orbit_{\bflambda}'$ denote the subset of $\Orbit_{\bflambda}$ of all $L$ such that $\Delta_I(\twist(V))\neq 0$ for all $I\subseteq [n]$. Then we have the continuous map
\begin{align}\label{moment_embedding_permutohedron_product_formula}
\Orbit_{\bflambda}'\to\prod_{k=1}^{n-1}\Perm{\bflambda^{(k)}}, \quad L \mapsto \Big(\mu\big(\Moserextend_{\bflambda^{(k)}}\big(\Delta_I(\twist(V)) : I\in\textstyle\binom{[n]}{k}\big)\big)\Big)_{1 \le k \le n-1}.
\end{align}
Moreover, when \eqref{moment_embedding_permutohedron_product_formula} is restricted to any subset of $\Orbit_{\bflambda}'$ which fixes the complex argument of $\Delta_I(\twist(V))$ for each nonempty $I\subsetneq [n]$, then it is a diffeomorphism onto its image. In particular, \eqref{moment_embedding_permutohedron_product_formula} restricted to $\Orbit_{\bflambda}^{>0}$ is a diffeomorphism onto its image.
\end{prop}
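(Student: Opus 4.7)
The plan is to factor the map~\eqref{moment_embedding_permutohedron_product_formula} as
\[
\Orbit_\bflambda' \xrightarrow{\Phi_1} \prod_{k=1}^{n-1}\mathbb{P}^{\binom{n}{k}-1}_{>0} \xrightarrow{\prod_{k}\Moser_{\bflambda^{(k)}}} \prod_{k=1}^{n-1}\Jac_{\bflambda^{(k)}}^{>0} \xrightarrow{\prod_k\mu} \prod_{k=1}^{n-1}\Perm{\bflambda^{(k)}},
\]
where $\Phi_1(L) := \bigl(|\Delta_I(\twist(V))| : I\in\binom{[n]}{k}\bigr)_{1\le k\le n-1}$, and to analyze each arrow separately. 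The continuity of the overall map comes down to the continuity of $\Phi_1$, since $\twist$ itself is only piecewise smooth. For this, using $\iota(g) = \delta_n g^{-1}\delta_n$ together with $g^{-1} = \adjoint{g}$ for $g\in\U_n$, one computes directly that
\[
|\Delta_I(\iota(g))| = |\Delta_{[k],I}(g)| \qquad (|I|=k)
\]
for any unitary representative $g$ of $V$; the right-hand side is invariant under the residual diagonal-unitary ambiguity in $g$ and continuous in $L$, so $\Phi_1$ is continuous on all of $\Orbit_\bflambda'$. The remaining two arrows are smooth by \cref{tridiagonal_flag}.

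For the diffeomorphism assertion, fix phases $\theta = (\theta_I)_{\emptyset\neq I\subsetneq[n]}$ and let $\mathcal{S}_\theta\subseteq\Orbit_\bflambda'$ be the subset on which $\arg\Delta_I(\twist(V)) = \theta_I$ for every $I$. Fixing all arguments collapses the projective $\mathbb{C}^*$-ambiguity in each Grassmannian factor down to positive rescaling, so on $\mathcal{S}_\theta$ the absolute-value map $(\Delta_I(\twist(V)))\mapsto(|\Delta_I(\twist(V))|)$ becomes a diffeomorphism at the projective level (essentially the identity under the identification $\mathbb{P}^{N-1}_{>0}\leftrightarrow\{[r_I e^{\ii\theta_I}]:r_I>0\}$). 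Combining this with the closed Pl\"ucker embedding and the cell-wise smoothness of $\twist$ (\cref{twist_involution}), $\Phi_1|_{\mathcal{S}_\theta}$ is a diffeomorphism onto its image, and the middle arrow is a product of diffeomorphisms.

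The main obstacle is the last arrow: showing that $\mu : \Jac_{\bflambda^{(k)}}^{>0}\to\Perm{\bflambda^{(k)}}^\circ$ is a diffeomorphism for each $k$. This is delicate because \cref{jacobi_manifold} provides only the twisted map $\mu\circ\twistorbit$, and the remark immediately after it asserts that $\mu$ itself is neither injective nor surjective on $\Jac_{\bflambda}^{\ge 0}$. The plan here is to exhibit $\Jac_{\bflambda^{(k)}}^{>0}$ as a single orbit of the left $\H_{\binom{n}{k}}^{>0}$-action on $\Fl_{\binom{n}{k}}(\mathbb{C})\cong\Orbit_{\bflambda^{(k)}}$. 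Because diagonal matrices commute with $\Diag{\bflambda^{(k)}}$, a direct computation with the Vandermonde-flag description (\cref{defn_krylov}) yields the equivariance $\Moser_{\bflambda^{(k)}}(h\cdot x) = h\cdot\Moser_{\bflambda^{(k)}}(x)$; combined with the transitivity of $\H^{>0}$ on $\mathbb{P}^{\binom{n}{k}-1}_{>0}$, this realizes $\Jac_{\bflambda^{(k)}}^{>0}$ as a single $\H^{>0}$-orbit. Since the entries of $\bflambda^{(k)}$ are distinct, the orbit is top-dimensional, so the standard moment-map theorem for Hamiltonian torus actions (invoked identically in the proof of \cref{twisted_moment_embedding}) implies that $\mu$ maps it diffeomorphically to $\Perm{\bflambda^{(k)}}^\circ$. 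The noninjectivity of $\mu$ on $\Jac_{\bflambda^{(k)}}^{\ge 0}$ occurs only on the boundary strata and does not obstruct this.

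Composing the three diffeomorphisms establishes that the map restricted to $\mathcal{S}_\theta$ is a diffeomorphism onto its image. The special case of $\Orbit_{\bflambda}^{>0}$ follows at once: by the restriction of $\twist$ to $\Fl_n^{>0}$ proved in \cite{bloch_karp23b}, every $L\in\Orbit_\bflambda^{>0}$ has $\twist(V)\in\Fl_n^{>0}$ with all $\Delta_I(\twist(V))$ real and positive, which is the case $\theta\equiv 0$.
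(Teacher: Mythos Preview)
Your argument is correct and follows the same overall factorization as the paper's one-line proof (which simply says ``This follows from \cref{jacobi_manifold}, using the discussion in \cref{embedding_remark}''), but you supply considerably more detail and, in doing so, address two points the paper leaves implicit.

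First, your continuity argument for $\Phi_1$ is a genuine contribution: since $\twist$ is only piecewise smooth on $\Fl_n(\mathbb{C})$, continuity of \eqref{moment_embedding_permutohedron_product_formula} on all of $\Orbit_{\bflambda}'$ is not automatic, and your identity $|\Delta_I(\iota(g))| = |\Delta_{[k],I}(g)|$ for any unitary representative $g$ neatly sidesteps the issue. The paper does not comment on this.

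Second, you correctly observe that \cref{jacobi_manifold} literally concerns $\mu\circ\twistorbit$ on $\Jac_{\bflambda}^{>0}$, whereas the composition here is $\mu\circ\Moser_{\bflambda^{(k)}}$, i.e.\ $\mu$ alone on $\Jac_{\bflambda^{(k)}}^{>0}$. Your resolution via the torus-orbit argument --- identifying $\Jac_{\bflambda^{(k)}}^{>0}$ with a single generic $\H^{>0}$-orbit of Vandermonde flags and then invoking the standard moment-map theorem exactly as in the proof of \cref{twisted_moment_embedding} --- is sound. (One phrasing note: writing ``$\Moser_{\bflambda^{(k)}}(h\cdot x)=h\cdot\Moser_{\bflambda^{(k)}}(x)$'' is slightly loose, since $\H^{>0}$ does not act on $\Orbit_{\bflambda^{(k)}}$ by conjugation; the clean statement is at the level of flags, $\Kry{\bflambda^{(k)}}{h\cdot x}=h\cdot\Kry{\bflambda^{(k)}}{x}$, which is what your computation actually shows.) The paper's citation of \cref{jacobi_manifold} is presumably shorthand for the underlying Bloch--Flaschka--Ratiu torus-orbit argument, which is precisely what you have unpacked.
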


Note that \eqref{moment_embedding_permutohedron_product_formula} is obtain by applying \eqref{embedding_remark_formula} (to $L = \ii M$, rather than $M$), and then applying $\mu$ to each of the $n-1$ factors.
\begin{proof}
This follows from \cref{jacobi_manifold}, using the discussion in \cref{embedding_remark}.
\end{proof}

\begin{rmk}\label{moment_product_extend_remark}
It is not clear how to extend \cref{moment_embedding_permutohedron_product} from $\Orbit_{\bflambda}^{>0}$ to $\Orbit_{\bflambda}^{\ge 0}$. This is because the Moser map $\Moser_{\bflambda}$ is only defined on $\mathbb{P}^{n-1}_{>0}$, not $\mathbb{P}^{n-1}_{\ge 0}$.
\end{rmk}

\begin{rmk}\label{moment_map_image_remark}
We note that in \eqref{moment_embedding_permutohedron_product_formula}, the codomain has dimension much greater than that of the domain ($\prod_{k=1}^{n-1}(\binom{n}{k}-1)$ versus $n!$). This comes from the same property of the Pl\"{u}cker embedding \eqref{plucker_embedding_flag}, whose image in $\prod_{k=1}^{n-1}\mathbb{P}^{\left(\hspace*{-1pt}\binom{n}{k}-1\right)}(\mathbb{C})$ is cut out by {\itshape Grassmann-Pl\"{u}cker} and {\itshape incidence-Pl\"{u}cker relations}. It may be interesting to study the image of this subset in $\prod_{k=1}^{n-1}\Perm{\bflambda^{(k)}}$.
\end{rmk}

The formula \eqref{moment_map_explicit_formula} suggests an interpolation between the two maps considered in \cref{twisted_moment_embedding} and \cref{moment_embedding_permutohedron_product}, namely,
\begin{align}\label{interpolated_moment_map}
\Orbit_{\bflambda} \to \prod_{k=1}^{n-1}\hypersimplex_{k,n}, \quad L \mapsto \big(\mu_{k,n}(\twist(V_k))\big)_{1 \le k \le n-1}
\end{align}
(with notation as in \cref{moment_map_explicit_remark}). We do not expect \eqref{interpolated_moment_map} to be injective on $\Orbit_{\bflambda}$, for similar reasons as discussed in \cref{embedding_remark}. However, we do not know whether it is injective when we require all Pl\"{u}cker coordinates to be nonnegative:
\begin{prob}\label{interpolated_moment_map_problem}
Is the map \eqref{interpolated_moment_map} injective when restricted to $\Orbit_{\bflambda}^{\ge 0}$?
\end{prob}

\bibliographystyle{alpha}
\bibliography{ref}

\end{document}